\documentclass[11pt]{article}
\usepackage[letterpaper,margin=1in]{geometry}
\usepackage{microtype}
\usepackage[font=small,labelfont=bf]{caption}
\usepackage{amsmath, amsfonts, amssymb}
\usepackage{mathtools}
\usepackage{bm}
\usepackage{graphicx}
\usepackage{booktabs}
\usepackage{algorithm}
\usepackage{algpseudocode}
\usepackage[round]{natbib}
\usepackage{float}
\usepackage{enumitem}
\usepackage{amsthm}
\usepackage{hyperref}
\hypersetup{colorlinks=true,linkcolor=blue,citecolor=blue,urlcolor=blue}

\newtheorem{assumption}{Assumption}
\newtheorem{proposition}{Proposition}
\newtheorem{theorem}{Theorem}
\newtheorem{condition}{Condition}
\newtheorem{corollary}{Corollary}

\newtheorem{remark}{Remark}
\newcommand{\EE}{\mathbb{E}}
\newcommand{\PP}{\mathbb{P}}

\newcommand{\norm}[1]{\left\lVert#1\right\rVert}
\newcommand{\bX}{\boldsymbol{X}}
\newcommand{\RR}{\mathbb{R}}

\title{Debiased Machine Learning for Count Data: a Partially Linear Poisson Model Based on Neural Networks}
\author{%
Ningkun Zhou$^{1}$ \quad Bryan E. Shepherd$^{1}$ \quad Qingyan Xiang$^{2}$\\[0.5em]
{\small $^{1}$Vanderbilt University}\\
{\small $^{2}$Vanderbilt University Medical Center}
}
\date{}

\begin{document}
\raggedbottom
\maketitle

\begin{abstract}
We develop a novel debiased machine learning (DML) estimator to analyze count data, which are common in clinical or biomedical studies. 
Specifically, we estimate the effect of a binary treatment on count outcomes using a partially linear Poisson model. 
A key advantage of this model is that it represents the complex covariate structure through a nuisance function, which is estimated flexibly using neural networks.
We use a Neyman-orthogonal score function to construct the estimator, reducing its sensitivity to errors in estimating the nuisance function.  
Cross-fitting is used to mitigate overfitting bias in neural networks.
Under mild regularity conditions, this DML estimator is asymptotically normal with root-$n$ convergence. 
We derive a closed-form variance estimator and construct a Wald confidence interval for the treatment effect.
Extensive simulations demonstrate that the proposed estimator reduces bias and root mean square error compared with the generalized linear model and the augmented inverse probability weighting (AIPW) estimator in most settings, while achieving a coverage probability near the nominal level.
The proposed procedure is implemented in the R package \texttt{PoissonDML}. 
We apply our approach to a synthetic HIV cohort data to investigate the effect of treatment on AIDS-defining event counts of 7,141 analyzed patients.
Our approach provides practical and valid estimation and inference for treatment effects under the partially linear Poisson model while combining modern machine learning methods to estimate complex covariate structures.
\end{abstract}

\section{Introduction} \label{sec:intro}

\subsection{Background and motivation}
Count outcomes commonly appear in epidemiological or clinical studies because patients may experience recurrent events, such as infections, hospitalizations, or adverse events. 
When the outcomes are counts, researchers are often interested in comparing the expected event counts between treatments, which can inform assessments of disease progression and treatment decisions. 
For example, in a large clinical HIV cohort \citep{mcgowan07ccasanet}, it is of interest to estimate the effect of dolutegravir (DTG)-based versus non-DTG antiretroviral therapy (ART) on the subsequent number of AIDS-defining events (ADEs), while controlling for demographic and clinical baseline variables.
As the presence of multiple confounders may introduce bias into effect estimates, it is essential to develop proper methods to analyze such data.

A standard approach to analyzing count outcomes is Poisson regression within the generalized linear model (GLM) framework. 
However, Poisson regression relies on correct specification of the parametric model of covariates, which can be challenging in observational studies with complex covariate structures. 
This challenge motivates semiparametric methods that aim to flexibly estimate the relationship between the covariates and the outcome. 
For example, the partially linear regression model proposed by \cite{robinson88plr} allows the covariate effects to be estimated flexibly while preserving an interpretable treatment effect parameter. 

\subsection{Literature review}

Modern machine learning methods have increasingly been adopted to model data with high-dimensional and complex covariate structures.
Among those approaches, neural networks have received particular attention because of their ability to approximate complex nonlinear functions and complex interactions without requiring explicit model specification \citep{hornik89mlp, hornik91mlp}. 
For example, several recent studies have adapted neural network architectures for count data by integrating Poisson or related likelihood functions \citep{montesinos20dlPoisson,montesinos21dlPoisson,jose24NNcounts}.
However, these methods are primarily designed for prediction, not estimation or inference on treatment effect parameters. 
It is challenging to directly incorporate machine-learned predictions into estimating equations, as these predictions are subject to bias. 
To address this, semiparametric frameworks have been developed to incorporate flexible machine-learned predictions in nuisance models for the augmented inverse probability weighting estimator (AIPW) \citep{robins94AIPW} and targeted maximum likelihood estimation (TMLE) \citep{vanderLaanRubin06tmle}.

More recently, \cite{chernozhukov18double} developed the double/debiased machine learning (DML) framework, which combines machine-learned nuisance functions with Neyman-orthogonal scores and cross-fitting.
These components reduce the sensitivity of estimating equations to errors in nuisance estimation and overfitting, enabling root-$n$ consistency and asymptotic normality under suitable conditions.
Subsequent developments address multiway-clustered data \citep{chiang22multiway}, heterogeneous effects and other causal functions \citep{semenova21cate}, mediation and dynamic treatments \citep{farbmacher22mediation,bodory22dynamic}, multiple and continuous treatments \citep{xiang25multiple,colangelo26continuous}, and nonlinear semiparametric models \citep{liu21logistic,nekipelov22orthogonal}.

Several studies have developed semiparametric estimating equations for Poisson models.
\cite{robinsRotnitzky01cmt} described a general class of estimating functions for regression models with a log link.
The Stata package \texttt{xpopoisson} implements Poisson regression using lasso-based nuisance estimation \citep{stata25xpopoisson}.
\cite{covert23relinquishing} derived a cross-fitted pseudo-Poisson estimator for binary treatments using the conditional outcome mean and the propensity score.
\cite{kumar22pricing} developed an orthogonal procedure for continuous price-sensitivity estimation in the presence of treatment heterogeneity. 
However, the existing literature has not provided a complete DML framework for this setting, including a practical algorithm for estimation, asymptotic theory for the DML estimator, and a variance estimator for constructing valid Wald confidence intervals.

\subsection{Our contribution}
To our knowledge, this work is the first to develop a practical estimation and inference procedure for the DML estimator under the partially linear Poisson model and to provide a comprehensive theoretical justification for its asymptotic properties.
Specifically, our main contributions are that we 
(1) develop a cross-fitted DML estimator that directly estimates $g(X)$ using a neural network; 
(2) establish Neyman orthogonality, double robustness, and root-$n$ asymptotic normality of the proposed estimator; 
(3) develop the R package \texttt{PoissonDML} to implement the proposed estimation and inference procedure, 
and (4) apply the proposed DML method to a synthetic HIV cohort to evaluate treatment effects on AIDS-defining event counts.
The R package \texttt{PoissonDML} can be accessed through GitHub  (\url{https://github.com/nkzhou/PoissonDML-R}).

\section{Model setup and estimating equations}
\label{sec:setup}
In this section, we first introduce the partially linear Poisson regression model. 
Then we present our proposed estimating equation and show that it is Neyman orthogonal.

\subsection{Partially linear Poisson model}
For $i=1,2,\ldots,n$, let $W_i=\left(Y_i,A_i,\bX_i\right)$ denote an i.i.d. observation, where $Y_i$ is a count outcome, $A_i\in\{0,1\}$ is a binary treatment, and $\bX_i\in\RR^p$ is a vector of observed covariates.
We assume that the count outcome and binary treatment are generated according to
\begin{align} \label{eq:plpois}
Y_i\mid \left(A_i,\bX_i\right)
    &\sim \mathrm{Poisson} \left(\lambda_0 \left(A_i,\bX_i \right) \right), \nonumber\\
\log\lambda_0 \left(A_i,\bX_i\right)
    &=\theta_0A_i+g_0\left(\bX_i\right),
\end{align}
and 
$$
A_i\mid\bX_i \sim \mathrm{Bernoulli}\left(\pi_0(\bX_i)\right). 
$$ 
Here, $\lambda_0(A_i, \bX_i)=\EE[Y_i\mid A_i, \bX_i]$ is the underlying conditional mean count and $g_0$ is an unknown nuisance function of $\bX_i$. 
The target parameter is the scalar coefficient $\theta_0$.
Since  $A_i\in\{0,1\}$ is a binary treatment, $\pi_0(\bX_i) =\PP\left(A_i=1\mid\bX_i\right)$ is the true propensity score and represents the conditional treatment assignment mechanism given $\bX_i$.

In clinical studies with binary treatments, 
\textbf{the parameter of interest} is often 
$$ e^{\theta_0} = \frac{\EE\left[Y_i\mid A_i=1,\bX_i\right]} {\EE\left[Y_i\mid A_i=0,\bX_i\right]}, 
$$
the conditional count mean ratio comparing two treatment regimens while holding $\bX_i$ fixed.
This parameter can have a causal interpretation under the potential outcome framework in an observational study, which requires appropriate identification assumptions, including consistency, positivity, and exchangeability \citep{hernan20whatif}.

\subsection{The Neyman orthogonal score function}
\label{sec:score}
The nuisance function $g_0$ is estimated by flexible machine learners such as neural networks. 
A naive approach is to jointly estimate the treatment effect and nuisance function using neural networks and then report the fitted treatment coefficient.
However, this approach is generally not appropriate, because estimation errors in the nuisance function $\hat{g}$ can affect the estimation of $\theta_0$. 
As a result, the naive estimator of $\theta_0$ may fail to be root-$n$ consistent and asymptotically normal.

To address this gap, we propose the following score function for $\theta_0$:
\begin{align} \label{eq:score}
    \psi\left(W_i;\theta,\eta\right)=\left(A_i-\mu\left(\bX_i\right)\right)\left(Y_i-e^{\theta A_i+g\left(\bX_i\right)}\right),
\end{align}
where $\theta$ is a candidate value of $\theta_0$, and $\eta=\left(g,\mu\right)$ denotes the nuisance functions to be estimated in the DML framework.
We derive that 
$$
\mu\left(\bX_i\right)=\frac{\EE\left[A_i\exp\{\theta A_i\}\mid\bX_i\right]}{\EE\left[\exp\{\theta A_i\}\mid\bX_i\right]}.
$$
For a binary treatment, this can be written as 
\begin{align} \label{eq:mu}
    \mu\left(\bX_i\right)=\frac{\pi\left(\bX_i\right)e^\theta}{1-\pi\left(\bX_i\right)+\pi\left(\bX_i\right)e^\theta},
\end{align}
where $\pi\left(\bX_i\right)=\PP\left(A_i=1\mid\bX_i\right)$ is the propensity score for observation $i$. 
The derivation of this score function is provided in Appendix~\ref{sec:scoreConstruction}.

Let $\eta_0=(g_0,\mu_0)$ be the true nuisance parameters.
Then under equation~(\ref{eq:plpois}), the proposed score function satisfies the population moment condition
\begin{align} \label{eq:moment}
    \EE\left[\psi\left(W_i;\theta_0,\eta_0\right)\right]=0.
\end{align}
In addition, the population moment satisfies 
\begin{align*}
    &\partial_g\EE[\psi(W_i;\theta_0,\eta)]=0\big|_{\eta=\eta_0} \qquad  \text{and} \qquad  \partial_\mu\EE[\psi(W_i;\theta_0,\eta)]=0\big|_{\eta=\eta_0}.
\end{align*}
They imply that the moment condition in equation~(\ref{eq:moment}) is insensitive to local perturbations of $\eta$ in a neighborhood of $\eta_0$. 
Consequently, even when the flexible learner $\hat{\eta}=\left(\hat{g},\hat{\mu}\right)$ converges at a slower rate, the estimator of $\theta_0$ can still achieve the desired $n^{-1/2}$ rate under suitable regularity conditions.
With these properties, $\psi$ is a Neyman-orthogonal score function for the partially linear Poisson model in equation~(\ref{eq:plpois}), which is formally defined in \cite{chernozhukov18double} and \cite{chiang22multiway}.
The Neyman orthogonality of this score is formally proved in Proposition~\ref{prop:orthogonal}.
In a broader context, our score function can be viewed as a model-specific instance of the general class of semiparametric estimating functions discussed by \cite{robinsRotnitzky01cmt}.

\section{Estimation procedure} \label{sec:procedure}

In this section, we present the estimation of nuisance functions, construction of the estimating equation, and inference for the DML estimator.
The complete algorithm for parameter estimation and inference is provided in Appendix~\ref{sec:dmlalgorithm}.

\subsection{Nuisance parameter estimation}
\label{subsec:nuest}
The proposed score function relies on two nuisance functions: the outcome nuisance $g_0(\bX_i)$ and the treatment nuisance $\mu_0(\bX_i)$.
To estimate $\mu_0(\bX_i)$, we first need to estimate $\pi_0(\bX_i)$ using a propensity score model. 

\begin{figure}[t]
    \centering
    \includegraphics[width=\textwidth]{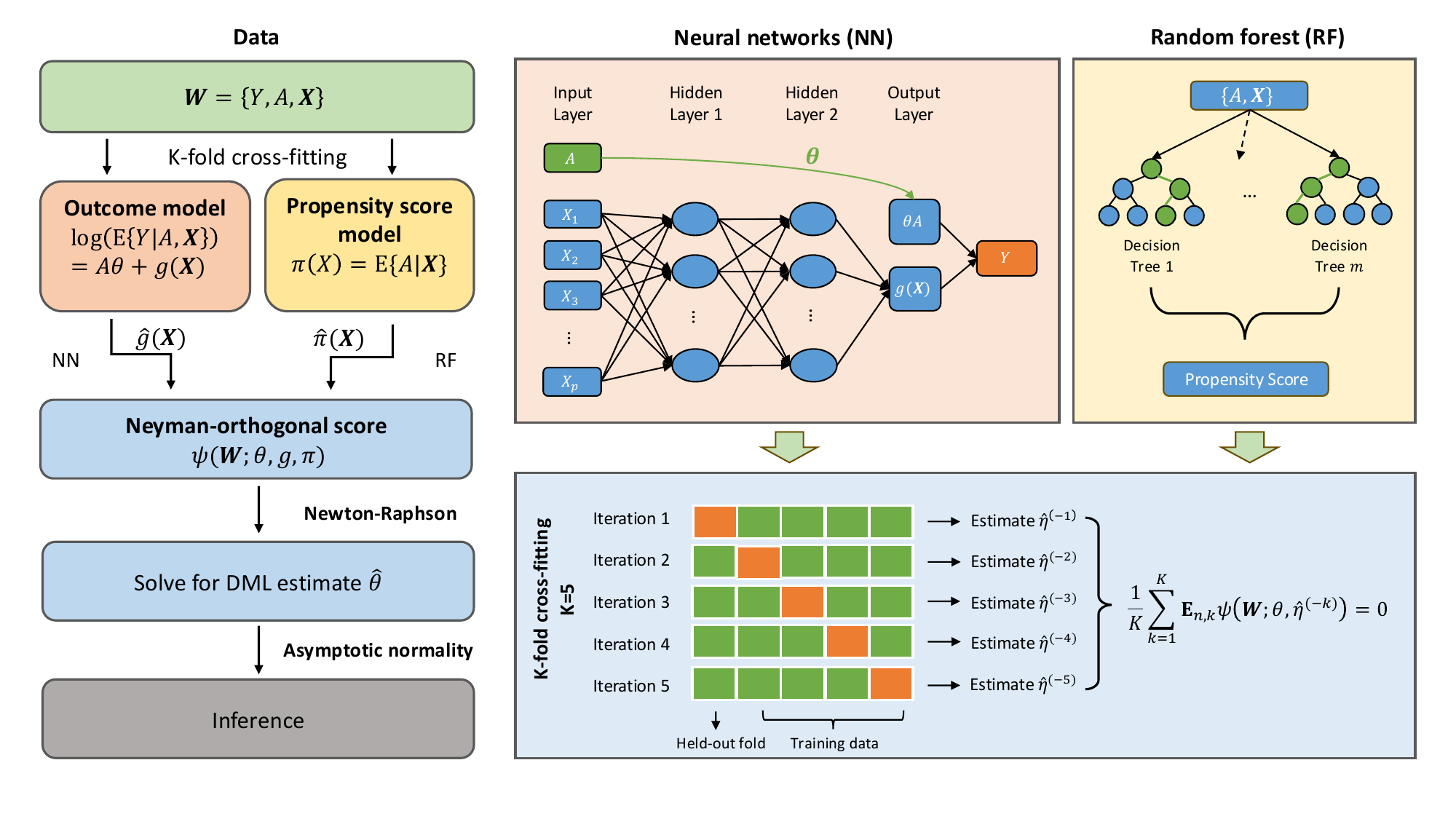}
    \caption{\textbf{Overview of the DML estimation and inference procedure.} The center panel shows the partially linear Poisson neural network for the outcome nuisance, and the right panel shows random-forest estimation of the propensity score. The left and lower panels show how the resulting nuisance estimates enter the Neyman-orthogonal score under cross-fitting and are used to estimate $\theta_0$.}
    \label{fig:dml-flowchart}
\end{figure}

Figure~\ref{fig:dml-flowchart} summarizes this procedure.
The upper middle panel illustrates the neural network (NN) model used to approximate the outcome nuisance function $g_0(\bX_i)$.
The outcome model has two branches. 
In the first branch, the covariates $\bX_i$ pass through a feedforward neural network with multiple hidden layers, allowing $g(\bX_i)$ to model complex nonlinearities and interactions among the covariates.
Here, $g(\bX_i)$ is the approximation to $g_0$.
In the second branch, the binary treatment $A_i$ bypasses the hidden layers and enters a separate linear node, contributing $\theta A_i$ to the output. 
These branches are combined at the output layer, giving the conditional mean count $\lambda_{\theta,g}(A_i,\bX_i)=\exp\{\theta A_i+g(\bX_i)\}$. 
This architecture allows a flexible structure for $g_0(\bX_i)$ while maintaining the interpretable treatment-effect parameter.
The Poisson negative log-likelihood is used as the loss function to train the neural networks and takes the form 
$$
-\log\mathcal L\left(W_i;\theta,g\right)=\exp\{\theta A_i+g\left(\bX_i\right)\}-Y_i\left(\theta A_i+g\left(\bX_i\right)\right),
$$
where terms independent of both $\theta$ and $g(\bX_i)$ are omitted. 

For the treatment nuisance function, let $\pi(\bX_i)$ denote an approximation to the true propensity score $\pi_0(\bX_i)=\PP(A_i=1\mid\bX_i)$, which can be estimated using different methods such as logistic regression, bagging or boosting, and recursive partitioning or tree-based methods \citep{austin11ps}. 

For estimation of both nuisance functions, we use $K$-fold cross-fitting to reduce overfitting bias \citep{chernozhukov18double} in machine learners, as illustrated in the lower panel of Figure~\ref{fig:dml-flowchart}.
In each fold, the outcome and treatment models are trained using observations outside the held-out fold and then used to predict $g_0(\bX_i)$ and $\pi_0(\bX_i)$ for observations in that fold.
After pooling the held-out predictions across folds, we obtain the cross-fitted nuisance estimates $\hat g_i$ and $\hat\pi_i$ for observation $i$, which are then used to construct the Neyman-orthogonal estimating equation.

\subsection{Estimating equation and Newton--Raphson root search}
Substituting $\hat g_i$ and $\hat\pi_i$ into equations~(\ref{eq:plpois}) and~(\ref{eq:mu}), respectively, we obtain 
\begin{align*}
    \hat{\lambda}_i\left(\theta\right)=e^{\theta A_i+\hat g_i} \qquad \text{and} \qquad \hat\mu_i(\theta)=\frac{\hat\pi_i e^{\theta}}{\left(1-\hat\pi_i+\hat\pi_i e^{\theta} \right)}.
\end{align*}
Then, substituting these into equation~\ref{eq:score} gives the fitted score function 
$$
\hat\psi_i(\theta) = (A_i-\hat\mu_i(\theta))(Y_i-\hat\lambda_i(\theta)). 
$$
Following the procedure of constructing a DML2 estimator in Definition~3.2 of \cite{chernozhukov18double}, the DML estimator $\hat\theta$ is the solution to the estimating equation $n^{-1}\sum_{i=1}^n\hat\psi_i\left(\theta\right)=0$.

We use the Newton--Raphson algorithm to solve the estimating equation for $\theta$.
The iterations are updated according to
\begin{align*}
    \theta^{(m+1)} = \theta^{(m)} -
    \frac{n^{-1}\sum_{i=1}^n \hat\psi_i\left(\theta^{(m)}\right)}
    {n^{-1}\sum_{i=1}^n \hat\psi_i^\prime\left(\theta^{(m)}\right)}.
\end{align*}
Here, 
\begin{align*}
    \hat\psi_i^\prime\left(\theta\right)
      =-\hat\mu_i^\prime\left(\theta\right)
        \left(Y_i-\hat\lambda_i\left(\theta\right)\right)
        -\left(A_i-\hat\mu_i\left(\theta\right)\right)
        A_i\hat\lambda_i\left(\theta\right)
\end{align*}
is the derivative of the score function, and 
$$
\hat\mu_i^\prime\left(\theta\right)=\hat\mu_i\left(\theta\right)\left(1-\hat\mu_i\left(\theta\right)\right)
$$ 
for a binary treatment.

\subsection{Variance estimator and Wald confidence intervals}
\label{subsec:wald}
We follow Theorem 3.3 and Corollary 3.1 of \cite{chernozhukov18double} and Theorem 4.2 of \cite{liu21logistic} to construct a variance estimator and a Wald confidence interval for our DML estimator.
The asymptotic validity of this inference procedure is established formally in Theorem~\ref{thm:normal}. 
Here, we describe how the asymptotic variance is estimated in practice.

We first define the Jacobian with respect to $\theta$, evaluated at $\theta_0$, as 
$$
J_0=\partial_\theta\EE\left[\psi\left(W_i;\theta,\eta_0\right)\right]\big|_{\theta=\theta_0}. 
$$
For our orthogonal score function, this Jacobian becomes
\begin{align}
    J_0 
    =\EE&\left[-\mu_0\left(\bX_i\right)\left(1-\mu_0\left(\bX_i\right)\right)
    \left(Y_i-\lambda_0(A_i,\bX_i)\right) 
    -\left(A_i-\mu_0(\bX_i)\right)
    A_i\lambda_0(A_i,\bX_i)\right]. \label{eq:J0}
\end{align}
Combining the Jacobian and the second moment of the score, we obtain the corresponding asymptotic variance 
\begin{align} \label{eq:asyvar}
    \sigma_0^2 = J_0^{-2} \EE\left[\psi^2 \left(W_i; \theta_0, \eta_0 \right) \right]. 
\end{align}

Substituting the machine-learned estimates $\hat g_i$ and $\hat\pi_i$ and evaluating at the DML estimate $\hat\theta$ give
$\hat\lambda_i(\hat\theta)=\exp\{\hat\theta A_i+\hat g_i\}$ and
$$
\hat\mu_i(\hat\theta)=\frac{\hat\pi_i e^{\hat\theta}}{1-\hat\pi_i+\hat\pi_i e^{\hat\theta}}.
$$
The Jacobian in equation~(\ref{eq:J0}) is then estimated by the sample average of individual score derivatives:
\begin{align*}
    \hat J =n^{-1}\sum_{i=1}^n
            -\hat\mu_i(\hat\theta)
            \left(1-\hat\mu_i(\hat\theta)\right) \left(Y_i-\hat\lambda_i(\hat\theta)\right)
            -
            \left(A_i-\hat\mu_i(\hat\theta)\right)
            A_i\hat\lambda_i(\hat\theta).
\end{align*}
Next, we estimate the population second moment $\EE[\psi^2(W_i;\theta_0,\eta_0)]$ by $n^{-1}\sum_{i=1}^n\psi^2(W_i;\hat\theta,\hat\eta)$.
Thus, the asymptotic variance in equation~(\ref{eq:asyvar}) is estimated by
\begin{align*}
    \hat\sigma^2
    &=\hat{J}^{-2}\,n^{-1}\sum_{i=1}^n
    \left[\left(A_i-\hat\mu_i(\hat\theta)\right)
        \left(Y_i-\hat\lambda_i(\hat\theta)\right)\right]^2.
\end{align*}

It follows that the estimated variance and standard error of $\hat\theta$ are 
\begin{align*}
    \widehat{\operatorname{Var}}(\hat\theta)=\frac{\hat\sigma^2}n \qquad \text{and} \qquad \widehat{\operatorname{se}} (\hat\theta)=\sqrt{\frac{\hat\sigma^2}n}.
\end{align*}
Therefore, a Wald confidence interval with nominal coverage $1-\alpha$ is $\hat\theta\pm z_{1-\alpha/2}\,\widehat{\operatorname{se}}(\hat\theta)$,
where $z_{1-\alpha/2}$ is the standard normal quantile.
This procedure, including nuisance parameter fitting, parameter estimation, and statistical inference, is implemented in the R package \texttt{PoissonDML}. 

\section{Theory} \label{sec:theory}
In this section, we establish that the proposed estimating equation is Neyman orthogonal and derive the asymptotic properties of the resulting DML estimator needed for valid statistical inference.
First, we introduce some notation. 
Let $o(a_n)$, $O(a_n)$, and $\Omega(a_n)$ represent sequences growing at a smaller, smaller/equal, and greater/equal rate of $a_n$, respectively. 
Also, let $o_p$, $O_p$, and $\Omega_p$ be the corresponding rates with probability approaching 1 as $n\rightarrow\infty$. 
In addition, let $\mathcal{X} \subseteq \mathbb{R}^p$ denote the domain of $\bX_i$.

We state the following regularity conditions for $\theta$ and the score function used for the DML estimator, adapted from Assumption REG in \cite{liu21logistic}.
\begin{assumption}[Regularity conditions]
\label{ass:reg}
The following regularity conditions hold:
\begin{enumerate}[label=(\alph*)]
    \item The parameter $\theta$ belongs to a compact set $\Theta \subseteq \mathbb{R}$ and there exists a sequence of positive constants $\delta_n = \Omega(n^{-1/2}\log n)$ such that $(\theta_0 - \delta_n, \theta_0 + \delta_n) \subseteq \Theta$.
    
    \item The outcome $Y_i$ satisfies finite moment conditions: $\EE [|Y_i|^q] < \infty$ for some $q > 2$.
    
    \item The nuisance functions $g_0$ and $\pi_0$ are uniformly bounded.
    Specifically, there exists a constant $C$ such that $\norm{g_0}_\infty=\sup_{\bX\in\mathcal{X}}|g_0(\bX)|\leq C$ and the propensity score $\pi_0(\bX)$ is bounded away from 0 and 1.
    
    \item There exist constants $0 < c_0 \leq c_1 < \infty$ such that $c_0 \leq |J_0| \leq c_1$ and $c_0\leq\EE[\psi^2(W_i;\theta_0, \eta_0)]\leq c_1$, where $J_0$ is defined in equation~(\ref{eq:J0}).
\end{enumerate}
\end{assumption}

We further define $\|f(U)\|_{Q,q} = \left(\int |f(u)|^q \,dQ(u)\right)^{1/q}$ for any real number $q \geq1$, measurable function $f$, random variable $U$, and probability measure $Q$.
With this notation, we adapt Assumption ML1 from \cite{liu21logistic} and introduce the following assumption:
\begin{assumption}[Quality of ML Nuisance Estimators]
\label{ass:ml}
Under $K$-fold cross-fitting, the nuisance models are trained on a training sample and evaluated on the corresponding held-out sample.
Let $\hat g(\bX)$ and $\hat\pi(\bX)$ denote the nuisance estimators fitted on the training sample for a given cross-fitting fold.
The following conditions are assumed to hold uniformly across the $K$ folds:
\begin{enumerate}[label=(\alph*)]
    \item
    $\sup_{\bX\in\mathcal X}|\hat g(\bX)-g_0(\bX)| + \sup_{\bX\in\mathcal X}|\hat\pi(\bX)-\pi_0(\bX)| =o_p(1).$

    \item
    $\|\hat g(\bX)-g_0(\bX)\|_{P,2} + \|\hat\pi(\bX)-\pi_0(\bX)\|_{P,2} =o_p(n^{-1/4}).$
\end{enumerate}
\end{assumption}
Part (a) requires the nuisance estimators to be uniformly consistent over the covariate space $\mathcal{X}$-- i.e., the maximum difference between the estimated and true functions will vanish in probability as the sample size increases. 
Part (b) specifies the required convergence rate for the nuisance estimators in terms of the $L_2$ norm.
Specifically, the sum of the $L_2$ errors is $o_p(n^{-1/4})$, a weaker requirement that allows nuisance estimation to converge at rates slower than the root-$n$ rate.
Machine learners such as LASSO regression, random forests, and neural networks can achieve this rate in common settings \citep{diaz20MLRate, farrell21deep}. 

From the two assumptions above, we can obtain the following remarks: 
\begin{remark} \label{remark:nuisance}
Based on Assumption 1(c) and Assumption 2, the nuisance estimators $\hat{g}(\bX)$ and $\hat{\pi}(\bX)$ are uniformly bounded on $\mathcal{X}$. 
Specifically, we have that $\hat{g}(\bX)$ is bounded by a finite constant and $\hat{\pi}(\bX)$ is bounded away from 0 and 1. 
\end{remark}

\begin{remark} \label{remark:mu}
The nuisance parameter $\mu(\bX_i)$, as defined in equation~(\ref{eq:mu}), is a smooth function with respect to $\pi(\bX_i)$ and $\theta$.
Since $\theta$ lies in a compact set and $\pi(\bX)$ is bounded away from 0 and 1 by Assumption~\ref{ass:reg}(a) and~(c), $\hat{\mu}(\bX_i)$ inherits the same uniform consistency and $L_2$ convergence rate properties as $\hat{\pi}(\bX_i)$.
\end{remark}

We follow the definition of Neyman orthogonality in \cite{chernozhukov18double} and \cite{chiang22multiway}.
The following proposition establishes that our proposed estimating equation is Neyman orthogonal.
A formal proof is provided in Appendix~\ref{sec:proofOrthogonal}.

\begin{proposition}[Neyman Orthogonality of the Proposed Estimating Equation]
\label{prop:orthogonal}
Our proposed score function in equation~(\ref{eq:score}) obeys the Neyman orthogonality condition at $\eta_0=(g_0, \mu_0)$, namely
$$
\partial_\eta\,\EE(\psi(W_i; \theta_0,\eta_0))[\eta-\eta_0]=0.
$$
\end{proposition}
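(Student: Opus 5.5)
We will verify the Gateaux derivative condition directly, following the definition of Neyman orthogonality in \cite{chernozhukov18double}. For a direction $\eta-\eta_0=(g-g_0,\mu-\mu_0)$ with $g,\mu$ in a suitable (uniformly bounded) nuisance realization set, define the path $\eta_r=\eta_0+r(\eta-\eta_0)$ for $r\in[0,1)$. Write
\[
\psi(W_i;\theta_0,\eta_r)=\bigl(A_i-\mu_0(\bX_i)-r\Delta_\mu(\bX_i)\bigr)\bigl(Y_i-e^{\theta_0A_i+g_0(\bX_i)+r\Delta_g(\bX_i)}\bigr),
\]
with $\Delta_g=g-g_0$ and $\Delta_\mu=\mu-\mu_0$. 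The claim amounts to showing that $\partial_r\EE[\psi(W_i;\theta_0,\eta_r)]\big|_{r=0}=0$.

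First, we justify differentiating under the expectation. By Assumption~\ref{ass:reg}(c) and the boundedness of $g$ and $\mu$ (Remark~\ref{remark:nuisance}), the $r$-derivative of the integrand is dominated by $C(1+|Y_i|)$ for $r\in[0,1)$. This bound is integrable by Assumption~\ref{ass:reg}(b), so dominated convergence applies. The derivative at $r=0$ then splits into two terms,
\[
-\EE\bigl[\Delta_\mu(\bX_i)(Y_i-\lambda_0(A_i,\bX_i))\bigr]\;-\;\EE\bigl[(A_i-\mu_0(\bX_i))\lambda_0(A_i,\bX_i)\Delta_g(\bX_i)\bigr].
\]

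The $\mu$-direction term vanishes by iterated expectations. Conditioning on $(A_i,\bX_i)$ gives $\EE[Y_i-\lambda_0(A_i,\bX_i)\mid A_i,\bX_i]=0$ under model~(\ref{eq:plpois}), while $\Delta_\mu(\bX_i)$ is $(A_i,\bX_i)$-measurable.

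The $g$-direction term is the key step, and it is where the specific form~(\ref{eq:mu}) of $\mu_0$ enters. Since $\lambda_0=e^{\theta_0A_i}e^{g_0(\bX_i)}$, we condition on $\bX_i$ and pull out $e^{g_0(\bX_i)}\Delta_g(\bX_i)$. This reduces the term to $\EE\bigl[(A_i-\mu_0(\bX_i))e^{\theta_0A_i}\mid\bX_i\bigr]$, which equals
\[
\EE[A_ie^{\theta_0A_i}\mid\bX_i]-\mu_0(\bX_i)\,\EE[e^{\theta_0A_i}\mid\bX_i].
\]
This is zero precisely by the defining ratio of $\mu_0$. For binary $A_i$ the two pieces are explicitly $\pi_0e^{\theta_0}$ and $\mu_0(1-\pi_0+\pi_0e^{\theta_0})$, which agree by equation~(\ref{eq:mu}) with $\pi=\pi_0$ and $\theta=\theta_0$.

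Combining the two terms gives the stated orthogonality. The same computation also confirms the moment condition~(\ref{eq:moment}) at $r=0$.

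The main obstacle is not the algebra but the rigor of the derivative exchange. We must specify the nuisance set so that perturbed $g$ and $\mu$ stay uniformly bounded, which is supplied by Assumption~\ref{ass:reg}(c) and Remark~\ref{remark:nuisance}. We must also ensure that $\mu_0$ is evaluated at the true $\theta_0$, since $\mu$ as defined depends on $\theta$. We will make this dependence explicit and treat $\mu_0(\cdot)=\mu(\cdot;\theta_0,\pi_0)$ as the true nuisance value.
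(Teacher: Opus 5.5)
Your proposal is correct and follows essentially the same route as the paper's proof in Appendix~\ref{sec:proofOrthogonal}. Like the paper, you differentiate along the path $\eta_0+r(\eta-\eta_0)$, remove the $\mu$-direction term by conditioning on $(A_i,\bX_i)$, and remove the $g$-direction term using the defining ratio of $\mu_0$ (the paper writes this as $\EE[(A_i-\mu_0)\lambda_0\mid\bX_i]=0$, which is your identity after factoring out $e^{g_0(\bX_i)}$). Your dominated-convergence justification for exchanging $\partial_r$ and $\EE$, via the envelope $C(1+|Y_i|)$ and Assumption~\ref{ass:reg}(b), adds rigor at a step the paper leaves implicit.
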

Proposition~\ref{prop:orthogonal} is used with Assumptions~\ref{ass:reg} and~\ref{ass:ml} to verify the conditions required for our main result below.
These conditions are stated in Appendix~\ref{sec:conditions} and verified in Appendix~\ref{sec:proofNormal} to prove Theorem~\ref{thm:normal}, which establishes the asymptotic normality of the DML estimator for a partially linear Poisson model. 

\begin{theorem}[Asymptotic Normality of the Proposed DML Estimator]
\label{thm:normal}
Under Proposition~\ref{prop:orthogonal} and Assumptions~\ref{ass:reg} and~\ref{ass:ml}, $\hat{\theta}$ derived from our proposed estimating equation satisfies
\begin{align*}
    \sqrt{n}\sigma_0^{-1}(\hat{\theta}-\theta_0) 
    =-\frac{1}{\sqrt{n}}\sum_{i=1}^n (\sigma_0J_0)^{-1}\psi(W_i;\theta_0,\eta_0)
    +o_p(1) \xrightarrow{d}{N}(0,1),
\end{align*}
where $J_0$ and $\sigma_0^2$ are defined in equations~(\ref{eq:J0}) and~(\ref{eq:asyvar}), respectively.
\end{theorem}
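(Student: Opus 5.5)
The plan is to verify the high-level conditions of Theorem 3.3 of \cite{chernozhukov18double}, in the form of Assumptions DML2 and the nonlinear version in \cite{liu21logistic}, for the DML2 estimator defined by $n^{-1}\sum_i\hat\psi_i(\theta)=0$. The conclusion then follows from their general result. One subtlety is that $\mu$ in equation~(\ref{eq:mu}) depends on $\theta$. I would therefore view the nuisance as $\eta=(g,\pi)$ and the score as $\psi(W;\theta,\eta)=(A-\mu_{\theta,\pi}(\bX))(Y-e^{\theta A+g(\bX)})$. The argument then has three pieces.

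\textbf{First}, identification and consistency. The moment condition $\EE[\psi(W;\theta_0,\eta_0)]=0$ holds because $\EE[Y-\lambda_0\mid A,\bX]=0$. Together with $|J_0|\ge c_0$ from Assumption~\ref{ass:reg}(d), compactness of $\Theta$, and smoothness in $\theta$, this gives local identification in the $\delta_n$-neighborhood. Consistency of $\hat\theta$ would come from a uniform law of large numbers on each fold, conditional on the training sample. The function class $\{\psi(\cdot;\theta,\hat\eta):\theta\in\Theta\}$ is indexed by one real parameter and is Lipschitz in $\theta$ with an integrable envelope. The envelope is integrable because $\hat g,\hat\pi$ are bounded (Remark~\ref{remark:nuisance}) and $\EE|Y|^q<\infty$. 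Cross-fitting makes $\hat\eta$ independent of the held-out fold, so no Donsker condition on the neural network class is needed.

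\textbf{Second}, the linearization. I expand $0=\EE_n\hat\psi(\hat\theta)$ around $\theta_0$ to get
$$
\sqrt n(\hat\theta-\theta_0)=-\hat J(\tilde\theta)^{-1}\sqrt n\,\EE_n\psi(W;\theta_0,\hat\eta).
$$
I show $\hat J(\tilde\theta)\to J_0$ in probability using the same boundedness and uniform consistency (Assumption~\ref{ass:ml}(a)), plus continuity in $\theta$. I then split the numerator on each fold $k$ as
$$
\sqrt n\,\EE_{n}\psi(\theta_0,\eta_0)+\sqrt n\,(\EE_{n,k}-\EE)[\psi(\theta_0,\hat\eta)-\psi(\theta_0,\eta_0)]+\sqrt n\,\EE[\psi(\theta_0,\hat\eta)\mid \text{train}].
$$
The middle empirical-process term is $o_p(1)$ by Chebyshev conditional on the training fold. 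Its conditional variance is bounded by $\|\psi(\hat\eta)-\psi(\eta_0)\|_{P,2}^2$, which is at most $C(\|\hat g-g_0\|_{P,2}^2+\|\hat\pi-\pi_0\|_{P,2}^2)\to0$. This uses the mean-value theorem with bounded $e^{g}$ and $\mu$, Remark~\ref{remark:mu}, and $\EE[Y^2\mid\bX]$ bounded, which follows from bounded $\lambda_0$.

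\textbf{Third}, the bias term, which I expect to be the main obstacle. I would compute it exactly rather than by a generic Taylor bound. Conditioning on $\bX$ and using $\EE[Y\mid A,\bX]=e^{\theta_0A+g_0}$ gives
$$
\EE[\psi(\theta_0,\hat\eta)\mid\bX]=\EE\bigl[(A-\hat\mu)e^{\theta_0A}\mid\bX\bigr]\bigl(e^{g_0}-e^{\hat g}\bigr).
$$
Next, $\EE[(A-\mu_0)e^{\theta_0 A}\mid\bX]=0$ by the definition of $\mu_0$. Hence the first factor equals $(\mu_0-\hat\mu)\EE[e^{\theta_0A}\mid\bX]$, so the bias is exactly a product of errors:
$$
\EE[\psi(\theta_0,\hat\eta)]=\EE\bigl[(\mu_0-\hat\mu)\,\EE[e^{\theta_0A}\mid\bX]\,(e^{g_0}-e^{\hat g})\bigr].
$$
By Cauchy--Schwarz, the mean-value theorem for $\exp$ on a bounded range, and Remark~\ref{remark:mu}, this is bounded by $C\|\hat\pi-\pi_0\|_{P,2}\|\hat g-g_0\|_{P,2}=o_p(n^{-1/2})$ by Assumption~\ref{ass:ml}(b). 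This product structure is also the content of Proposition~\ref{prop:orthogonal} and of the double robustness of the estimator.

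Combining the three pieces gives $\sqrt n(\hat\theta-\theta_0)=-J_0^{-1}n^{-1/2}\sum_i\psi(W_i;\theta_0,\eta_0)+o_p(1)$. Dividing by $\sigma_0$, the Lindeberg--L\'evy CLT applies, since $\EE\psi^2\in[c_0,c_1]$ by Assumption~\ref{ass:reg}(d). This yields $N(0,1)$ because $\sigma_0^2J_0^2=\EE\psi^2$ by equation~(\ref{eq:asyvar}).
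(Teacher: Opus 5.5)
Your route is correct but differs from the paper's, and one step, consistency, needs repair.

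\textbf{Comparison of routes.} Your opening sentence says you will verify the hypotheses of Theorem 3.3 of \cite{chernozhukov18double}, but what you actually write is a self-contained Z-estimator proof. The paper does the verification: it checks Conditions~\ref{cond:a1}--\ref{cond:a2}, namely identification, twice Gateaux differentiability, a uniform-entropy bound for $\{\psi(\cdot;\theta,\eta):\theta\in\Theta\}$, and an $O(\tau_n^2+\tau_n\xi_{2,n}+\xi_{2,n}^2)$ bound on the second $r$-derivative along the joint path $(\theta_r,\eta_r)$. It then cites the theorem. Your proof instead has three steps:
\begin{itemize}
\item an exact mean-value expansion in the scalar $\theta$ replaces the second-order remainder in $\theta$;
\item conditional Chebyshev on each held-out fold disposes of the empirical-process term, so no entropy condition is needed;
\item the first-order bias is computed exactly as $\EE[(\mu_0-\hat\mu)\,\EE[e^{\theta_0A}\mid\bX]\,(e^{g_0}-e^{\hat g})]$, a pure product of errors.
\end{itemize}
These steps are correct. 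Your route is more elementary and more informative than the paper's. It shows that Neyman orthogonality and double robustness are the same identity, and it avoids the joint-path derivative bookkeeping and the auxiliary sequence $\tau_n$. It also shows that the rate in Assumption~\ref{ass:ml}(b) enters only through the product $\|\hat\pi-\pi_0\|_{P,2}\|\hat g-g_0\|_{P,2}=o_p(n^{-1/2})$; the other terms need only consistency. The paper's route gains a reusable black-box theorem that also covers approximate roots of the estimating equation.

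\textbf{The gap: consistency.} You need $\hat\theta\to\theta_0$ so that $\hat J(\tilde\theta)\to J_0$. From $|J_0|\ge c_0$ and smoothness you get only local identification, and a $\delta_n$-neighborhood with $\delta_n\to0$ says nothing about consistency. A ULLN together with $\sup_{\theta}|\EE[\psi(W;\theta,\hat\eta)\mid\text{train}]-\EE\psi(W;\theta,\eta_0)|\to0$ yields only $\EE\psi(W;\hat\theta,\eta_0)=o_p(1)$. You should state that second step explicitly; it is the paper's Condition~\ref{cond:a2}(c)i. To conclude $\hat\theta\to\theta_0$ you additionally need $\theta_0$ to be the unique zero of $\theta\mapsto\EE\psi(W;\theta,\eta_0)$ on the compact set $\Theta$, and local arguments cannot supply this.

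Fill it as the paper does when verifying Condition~\ref{cond:a1}(c):
$\EE\psi(W;\theta,\eta_0)=(e^{\theta_0}-e^{\theta})\,\EE[e^{g_0}\pi_0(1-\pi_0)/(1-\pi_0+\pi_0e^{\theta})]$,
whose expectation factor is strictly positive. The same sign change also shows that, with probability approaching one, the sample equation has a root near $\theta_0$. Your exact mean-value expansion presupposes that such a root exists.
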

This theorem supports valid statistical inference for our proposed DML estimator.
The proof of this theorem is detailed in Appendix~\ref{sec:proofNormal}.

\begin{corollary}[Consistency of the variance estimator]
\label{cor:variance}
Under the conditions of Theorem~\ref{thm:normal}, $\hat{\sigma}^2 \xrightarrow{p} \sigma_0^2$.
Consequently, $\sqrt n(\hat\theta-\theta_0)/\hat\sigma$ converges in distribution to a standard normal random variable.
\end{corollary}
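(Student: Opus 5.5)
To prove Corollary~\ref{cor:variance}, I would show separately that $\hat J\xrightarrow{p}J_0$ and $n^{-1}\sum_{i=1}^n\psi^2(W_i;\hat\theta,\hat\eta)\xrightarrow{p}\EE[\psi^2(W_i;\theta_0,\eta_0)]$. Since $|J_0|\ge c_0>0$ by Assumption~\ref{ass:reg}(d), the continuous mapping theorem then gives $\hat\sigma^2=\hat J^{-2}\,n^{-1}\sum_i\psi^2(W_i;\hat\theta,\hat\eta)\xrightarrow{p}\sigma_0^2$. The second claim follows from Theorem~\ref{thm:normal} and Slutsky's lemma: $\sqrt n(\hat\theta-\theta_0)/\hat\sigma=(\sigma_0/\hat\sigma)\cdot\sqrt n\sigma_0^{-1}(\hat\theta-\theta_0)\xrightarrow{d}N(0,1)$, using $\sigma_0^2\ge c_0/c_1^2>0$.

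For each sample average, I would work fold by fold, since $K$ is fixed. On fold $k$ with index set $I_k$, I write
\begin{align*}
n_k^{-1}\sum_{i\in I_k}f(W_i;\hat\theta,\hat\eta_k)-\EE f(W;\theta_0,\eta_0)
&=\Big[n_k^{-1}\sum_{i\in I_k}\big(f(W_i;\hat\theta,\hat\eta_k)-f(W_i;\theta_0,\hat\eta_k)\big)\Big]\\
&\quad+\Big[n_k^{-1}\sum_{i\in I_k}f(W_i;\theta_0,\hat\eta_k)-\EE f(W;\theta_0,\eta_0)\Big],
\end{align*}
where $f$ is either the score derivative $\psi'$ or $\psi^2$.

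For the first bracket, I use that $\hat\theta-\theta_0=O_p(n^{-1/2})$ by Theorem~\ref{thm:normal}. By Remark~\ref{remark:nuisance} and compactness of $\Theta$, the quantities $\hat\lambda_i(\theta)=e^{\theta A_i+\hat g_i}$ and $\hat\mu_i(\theta)$ and their $\theta$-derivatives are uniformly bounded. Hence a mean value expansion gives
$$
|f(W_i;\hat\theta,\hat\eta)-f(W_i;\theta_0,\hat\eta)|\le C(1+Y_i^2)|\hat\theta-\theta_0|
$$
for $f=\psi^2$, with the factor $(1+Y_i)$ instead for $f=\psi'$. The average of $1+Y_i^2$ is $O_p(1)$ because $\EE Y_i^2<\infty$ by Assumption~\ref{ass:reg}(b). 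The first bracket is therefore $O_p(n^{-1/2})$.

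For the second bracket, I condition on the training sample $I_k^c$, so that $\hat\eta_k$ is fixed and the $W_i$, $i\in I_k$, are i.i.d. I split it into two parts.
\begin{itemize}[label={}]
\item \emph{Conditional mean part.} The difference $\EE[f(W;\theta_0,\hat\eta_k)\mid I_k^c]-\EE f(W;\theta_0,\eta_0)$ is bounded by $C\,\EE[(1+Y^2)]^{1/2}$ times the $L_2$ or sup-norm nuisance error. This uses Lipschitz continuity of $f$ in $(g,\mu)$ on bounded sets, Cauchy--Schwarz, and Assumption~\ref{ass:ml}(a) together with Remark~\ref{remark:mu}. It is therefore $o_p(1)$.
\item \emph{Centered part.} The fluctuation $n_k^{-1}\sum_{i\in I_k}\{f(W_i;\theta_0,\hat\eta_k)-\EE[f\mid I_k^c]\}$ needs a weak law of large numbers.
\end{itemize}

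The centered part is the main obstacle, because $f=\psi^2$ involves $Y^2$ and Assumption~\ref{ass:reg}(b) only guarantees $q>2$ moments, not a fourth moment. A conditional Chebyshev bound is therefore unavailable. Instead I would use the conditional von Bahr--Esseen inequality for i.i.d. mean-zero summands with exponent $r=q/2>1$. This gives a bound of order $n^{1-r}\,\EE[|f|^{r}\mid I_k^c]$ with $\EE|f|^r\le C\EE(1+|Y|^{q})<\infty$, uniformly in $\hat\eta_k$ by the uniform bounds. Hence the centered part is $o_p(1)$ conditionally, and thus unconditionally. For $\psi'$ the simpler Chebyshev bound suffices. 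Summing over the $K$ folds with weights $n_k/n$ completes the proof.
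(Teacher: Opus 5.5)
Your argument is correct, but it is organized differently from the paper's. The paper never conditions on the training folds. It compares the plug-in averages directly with the oracle averages, asserting $n^{-1}\sum_i\{\psi^2(W_i;\hat\theta,\hat\eta)-\psi^2(W_i;\theta_0,\eta_0)\}=o_p(1)$ and $\hat J-n^{-1}\sum_i\partial_\theta\psi(W_i;\theta_0,\eta_0)=o_p(1)$. These follow from consistency of $\hat\theta$, uniform consistency of the nuisances and smoothness of the score. It then applies the ordinary i.i.d.\ law of large numbers to the oracle averages, followed by the continuous mapping theorem. Made explicit, the first step is your first bracket plus the pointwise bound $|f(W_i;\theta_0,\hat\eta_k)-f(W_i;\theta_0,\eta_0)|\le C\,\|\hat\eta_k-\eta_0\|_\infty(1+Y_i^2)$ on the event $\hat\eta_k\in\mathcal T_n$. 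After averaging this is $o_p(1)\cdot O_p(1)$, by Markov with $\EE Y^2<\infty$. Khinchin's weak law then handles the oracle average using only $\EE\psi^2(W;\theta_0,\eta_0)<\infty$.

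So the heavy-tail difficulty you single out comes from your choice to center at $\EE[f(W;\theta_0,\hat\eta_k)\mid I_k^c]$, which turns the problem into a triangular-array law of large numbers. Your conditional-mean part already uses the sup-norm rate of Assumption~\ref{ass:ml}(a), so the same rate gives the pointwise bound above and the von Bahr--Esseen step can be dropped. What your route buys is a fully explicit, fold-by-fold argument in the standard DML conditioning template; the paper's proof is only a short sketch. The cost is an extra moment inequality that the oracle comparison avoids.

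Two small repairs are needed. First, in the conditional-mean part the factor for the sup-norm bound should be $\EE(1+Y^2)$, not $\EE[(1+Y^2)]^{1/2}$. An $L_2$ version via Cauchy--Schwarz would need $\EE Y^4$. Use instead H\"older's inequality, pairing $1+Y^2\in L^{q/2}$ with the bounded, $L_2$-small nuisance error, or use the bounded Poisson conditional moments implied by Assumption~\ref{ass:reg}(c). Second, von Bahr--Esseen holds for exponents in $[1,2]$, so take $r=\min\{q/2,2\}$.
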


The score function in equation~(\ref{eq:score}) also has the double-robustness property.
In the context of the partially linear Poisson model, the estimating equation remains unbiased at $\theta_0$ if either the treatment nuisance function is correctly specified, $\mu(\bX_i)=\mu_0(\bX_i)$, or the outcome nuisance function is correctly specified, $g(\bX_i)=g_0(\bX_i)$.
We formally establish this result in Appendix~\ref{sec:proofDR}.

\section{Simulation study} \label{sec:sim}
This section presents a simulation study evaluating the finite-sample performance of our proposed Poisson DML estimator and comparing it with alternative approaches.

\subsection{Data-generating process}
\label{sec:sim_dgp}
For each simulation replicate, we generated four mutually independent baseline covariates:
\begin{align*}
X_{ij}\overset{\mathrm{iid}}{\sim}\mathcal{N}(0,1),
\qquad j=1,\ldots,4.
\end{align*}
Here, the standard normal distribution was truncated to $[-10,10]$ to ensure finite first and second moments for the count outcome.
Treatment was generated according to
\begin{align*}
A_i\mid\bX_i
&\sim\mathrm{Bernoulli}\left(\pi_0(\bX_i)\right),
\end{align*}
where
\begin{align*}
    \pi_0(\bX_i) =\operatorname{expit}\big(0.5+0.8X_{i1}-0.5X_{i2}
    +0.3X_{i3}-0.6X_{i4}\big).
\end{align*}

Conditional on treatment and covariates, the outcome was generated as
\begin{align*}
Y_i\mid A_i,\bX_i
&\sim\mathrm{Poisson}\left(\lambda_s(A_i,\bX_i)\right) \qquad \text{and} \qquad
\lambda_s(A_i,\bX_i) =e^{\theta_0A_i+g_s(\bX_i)}
\end{align*}
where $g_s$ and $\lambda_s$ are the true underlying functions under scenario $s\in\{1,2\}$.
The outcome nuisance function under the simpler covariate structure was
\begin{align*}
    g_{1}(\bX_i)
    =0.20\,X_{i1}^{2}+0.05\,X_{i2}+0.05\,X_{i3} 
    -0.30\,X_{i2}X_{i3}+0.50\,\mathbf 1\{X_{i4}>2\}.
\end{align*}
The nuisance function under the more complex covariate structure $g_2$ was
\begin{align*}
    g_{2}(\bX_i)
    &= 0.20\,X_{i1}^{2} + 0.10\,X_{i2} + 0.20\,X_{i3}
    - 0.10\,X_{i2}X_{i3}
    + 0.30\,\mathbf 1\{X_{i4}>2\} \\
    &\quad + 0.15\,X_{i1}X_{i2} + 0.10\,X_{i3}^{2}
    + 0.20\,X_{i1}X_{i4} + 0.35\,\sin(X_{i2}) \\
    &\quad + 0.20\,\log (|X_{i3}|+1)
    + 0.25\,\mathbf 1\{X_{i1}>2\}\,\mathbf 1\{X_{i2}<-1\}.
\end{align*}
Under these nuisance functions, the bounded covariate support ensures bounded baseline log-means, propensity scores bounded away from zero and one, and finite outcome moments of every order.

We considered $\theta_0\in\{0.5, 1.0, 2.0 \}$, three increasing treatment effect sizes corresponding to mean ratios of approximately 1.65, 2.72, and 7.39.
We used sample sizes of $1,000$ and $5,000$ and generated $N=500$ Monte Carlo replicates for each combination of nuisance scenario, sample size, and treatment effect size.

\subsection{Applying the proposed methods to simulated datasets} \label{sec:sim_methods}

We evaluated our proposed DML estimator with propensity scores estimated using either logistic regression or random forests (RF). 
Both methods used the partially linear Poisson neural networks described in Section~\ref{subsec:nuest} to estimate the outcome nuisance function $g_0(\bX_i)$. 
We used two-fold and five-fold cross-fitting to obtain out-of-fold predictions of nuisance functions.
The DML estimate of $\theta_0$ was obtained by solving the Neyman-orthogonal estimating equation.
Standard errors and Wald 95\% confidence intervals were calculated using the variance estimator described in Section~\ref{subsec:wald}.
We also included a DML estimator in which the true $g_0(\bX_i)$ from the data-generating process was substituted in place of the fitted nuisance $\hat g$ from the neural networks, while the propensity score remained estimated using five-fold cross-fitting.

\begin{figure}[!t]
    \centering
    \includegraphics[width=0.85\columnwidth]{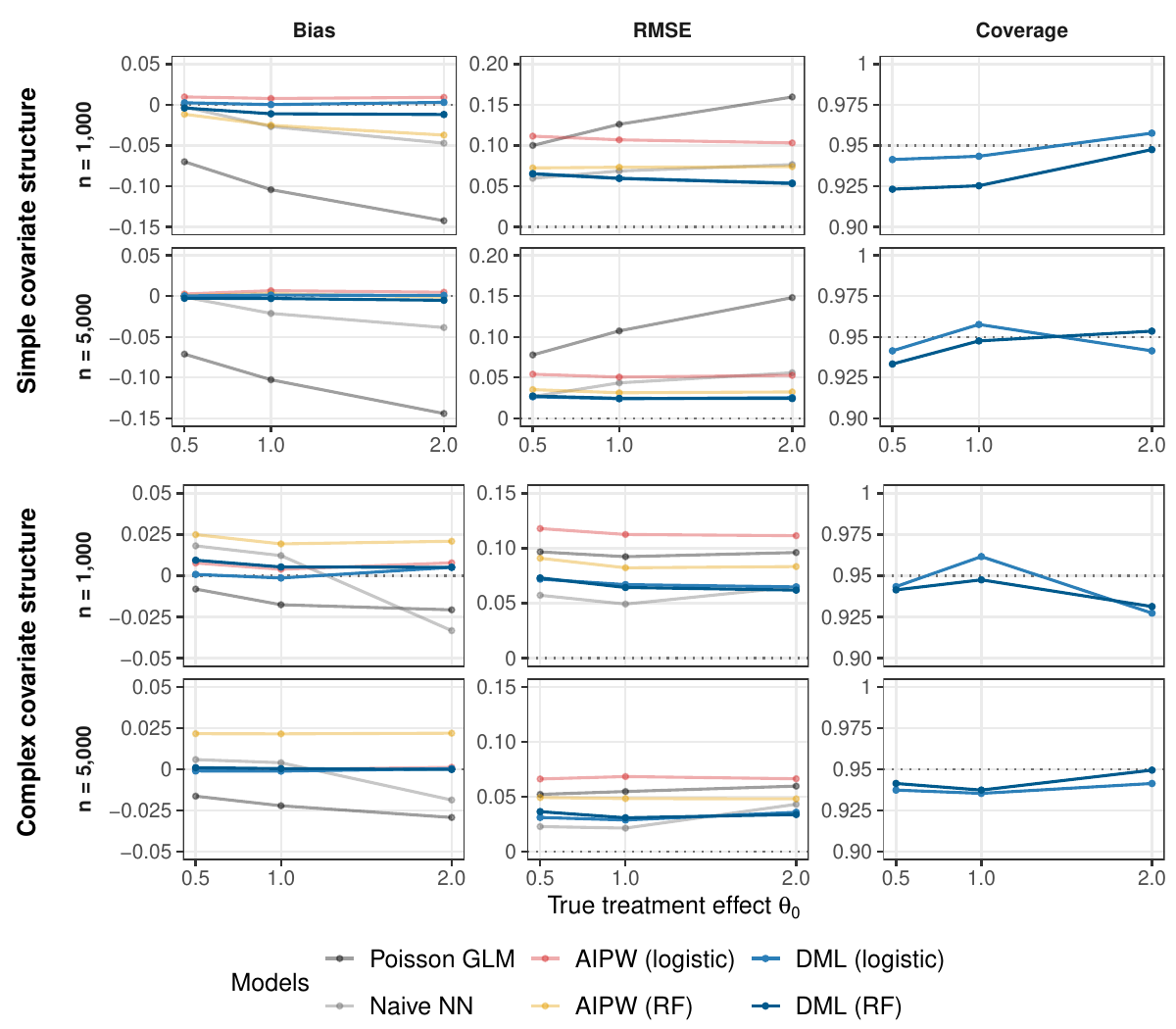}
    \caption{\textbf{Finite-sample performance of the DML estimator and comparative methods}. Bias and root mean square error (RMSE) are reported on the scale of $\theta$, and coverage denotes the empirical coverage probability of the 95\% Wald confidence interval. The DML estimators use five-fold cross-fitting.}
    \label{fig:sim_results}
\end{figure}

We used five-fold cross-validation to select the hyperparameters for the neural network and random forest models.
The first five of the 500 simulated datasets were reserved for hyperparameter tuning, and the configuration with the lowest average loss was applied to the remaining datasets.
The tuning grids are reported in Appendix~\ref{sec:sim_appendix}.
We summarized the performance of estimators using empirical bias and root mean square error (RMSE) and, for the DML estimators, empirical coverage probability (CP) of the 95\% Wald confidence intervals.

Comparative methods included a linearly adjusted Poisson GLM, a naive partially linear Poisson neural network (naive NN) estimator fitted using the full sample, and common AIPW estimators using a Poisson GLM outcome regression and either logistic regression or a random forest as the propensity score model.
For the AIPW estimator, we calculated the two marginal counterfactual mean counts and their ratio.
Under the partially linear Poisson model, this ratio equals $\exp\{\theta_0\}$ and its logarithm estimates $\theta_0$, so AIPW targets the same estimand as the other methods.
For the naive NN estimator, the neural network was fitted to the full sample of each dataset using the hyperparameters selected during tuning. 
The fitted coefficient of the treatment node was reported directly as the estimate of $\theta_0$, so this estimator served as a comparator that did not use cross-fitting or the Neyman-orthogonal estimating equation.
The implementation of these comparative methods can be found in Appendix~\ref{sec:sim_appendix}.

\subsection{Results}

Figure~\ref{fig:sim_results} summarizes the empirical bias, RMSE, and CP of the 95\% confidence intervals under the two covariate structures for the DML estimators using five-fold cross-fitting and other comparative methods.
Complete numerical results, including the DML estimators using two-fold cross-fitting and the true outcome nuisance, are reported in Tables~\ref{tab:sim_g1} and~\ref{tab:sim_g2}.

\begin{table}[!t]
\centering
\caption{Simulation results under the simple covariate structure $g_1(\bX_i)$.}
\label{tab:sim_g1}
\begingroup
\footnotesize
\setlength{\tabcolsep}{1.5pt}
\renewcommand{\arraystretch}{1.08}
\resizebox{\textwidth}{!}{%
\begin{tabular}{@{}l *{18}{r}@{}}
\toprule
& \multicolumn{6}{c}{$\theta_0=0.5$} & \multicolumn{6}{c}{$\theta_0=1.0$} & \multicolumn{6}{c}{$\theta_0=2.0$} \\
\cmidrule(lr){2-7}\cmidrule(lr){8-13}\cmidrule(lr){14-19}
& \multicolumn{3}{c}{$n=1{,}000$} & \multicolumn{3}{c}{$n=5{,}000$} & \multicolumn{3}{c}{$n=1{,}000$} & \multicolumn{3}{c}{$n=5{,}000$} & \multicolumn{3}{c}{$n=1{,}000$} & \multicolumn{3}{c}{$n=5{,}000$} \\
\cmidrule(lr){2-4}\cmidrule(lr){5-7}\cmidrule(lr){8-10}\cmidrule(lr){11-13}\cmidrule(lr){14-16}\cmidrule(lr){17-19}
\textbf{Method} & \textbf{Bias} & \textbf{RMSE} & \textbf{CP} & \textbf{Bias} & \textbf{RMSE} & \textbf{CP} & \textbf{Bias} & \textbf{RMSE} & \textbf{CP} & \textbf{Bias} & \textbf{RMSE} & \textbf{CP} & \textbf{Bias} & \textbf{RMSE} & \textbf{CP} & \textbf{Bias} & \textbf{RMSE} & \textbf{CP} \\
\midrule
\multicolumn{19}{l}{\textbf{Methods for comparison}} \\
\midrule
Poisson GLM & $-0.0699$ & 0.1001 & \textemdash & $-0.0712$ & 0.0778 & \textemdash & $-0.1042$ & 0.1261 & \textemdash & $-0.1027$ & 0.1073 & \textemdash & $-0.1424$ & 0.1596 & \textemdash & $-0.1442$ & 0.1482 & \textemdash \\
Naive NN & $-0.0025$ & 0.0597 & \textemdash & $-0.0011$ & 0.0266 & \textemdash & $-0.0267$ & 0.0687 & \textemdash & $-0.0212$ & 0.0435 & \textemdash & $-0.0471$ & 0.0766 & \textemdash & $-0.0386$ & 0.0559 & \textemdash \\
AIPW (GLM + logistic) & $\phantom{-}0.0097$ & 0.1115 & \textemdash & $\phantom{-}0.0026$ & 0.0541 & \textemdash & $\phantom{-}0.0078$ & 0.1070 & \textemdash & $\phantom{-}0.0067$ & 0.0506 & \textemdash & $\phantom{-}0.0092$ & 0.1032 & \textemdash & $\phantom{-}0.0048$ & 0.0527 & \textemdash \\
AIPW (GLM + RF) & $-0.0116$ & 0.0725 & \textemdash & $\phantom{-}0.0008$ & 0.0352 & \textemdash & $-0.0251$ & 0.0733 & \textemdash & $\phantom{-}0.0030$ & 0.0313 & \textemdash & $-0.0372$ & 0.0741 & \textemdash & $-0.0012$ & 0.0323 & \textemdash \\
\midrule
\multicolumn{19}{l}{\textbf{DML with logistic propensity scores}} \\
\midrule
DML ($K\!=\!2$) & $\phantom{-}0.0037$ & 0.0734 & 0.9273 & $\phantom{-}0.0002$ & 0.0270 & 0.9414 & $-0.0012$ & 0.0642 & 0.9313 & $\phantom{-}0.0017$ & 0.0247 & 0.9515 & $\phantom{-}0.0042$ & 0.0583 & 0.9414 & $\phantom{-}0.0013$ & 0.0258 & 0.9455 \\
DML ($K\!=\!5$) & $\phantom{-}0.0025$ & 0.0655 & 0.9414 & $\phantom{-}0.0003$ & 0.0264 & 0.9414 & $\phantom{-}0.0003$ & 0.0595 & 0.9434 & $\phantom{-}0.0013$ & 0.0240 & 0.9576 & $\phantom{-}0.0030$ & 0.0539 & 0.9576 & $\phantom{-}0.0009$ & 0.0253 & 0.9414 \\
DML (true $g$) & $\phantom{-}0.0004$ & 0.0592 & 0.9434 & $\phantom{-}0.0001$ & 0.0257 & 0.9414 & $-0.0021$ & 0.0518 & 0.9495 & $\phantom{-}0.0013$ & 0.0229 & 0.9475 & $\phantom{-}0.0010$ & 0.0469 & 0.9556 & $\phantom{-}0.0010$ & 0.0217 & 0.9475 \\
\midrule
\multicolumn{19}{l}{\textbf{DML with random forest propensity scores}} \\
\midrule
DML ($K\!=\!2$) & $-0.0031$ & 0.0678 & 0.9394 & $-0.0028$ & 0.0285 & 0.9232 & $-0.0117$ & 0.0625 & 0.9253 & $-0.0038$ & 0.0256 & 0.9434 & $-0.0139$ & 0.0582 & 0.9293 & $-0.0060$ & 0.0246 & 0.9434 \\
DML ($K\!=\!5$) & $-0.0039$ & 0.0652 & 0.9232 & $-0.0026$ & 0.0273 & 0.9333 & $-0.0110$ & 0.0598 & 0.9253 & $-0.0029$ & 0.0244 & 0.9475 & $-0.0119$ & 0.0533 & 0.9475 & $-0.0051$ & 0.0242 & 0.9535 \\
DML (true $g$) & $\phantom{-}0.0006$ & 0.0569 & 0.9374 & $\phantom{-}0.0002$ & 0.0255 & 0.9434 & $-0.0015$ & 0.0508 & 0.9596 & $\phantom{-}0.0015$ & 0.0230 & 0.9495 & $\phantom{-}0.0020$ & 0.0483 & 0.9535 & $\phantom{-}0.0010$ & 0.0217 & 0.9556 \\
\bottomrule
\end{tabular}%
}
\endgroup
\par\smallskip
\parbox{\textwidth}{\footnotesize
Bias and RMSE are on the scale of $\theta_0$; CP is the empirical coverage probability of the nominal 95\% Wald confidence interval.
DML (true $g$) uses the true outcome nuisance from the data-generating process and five-fold cross-fitting to estimate propensity scores.
}
\end{table}

\begin{table}[!t]
\centering
\caption{Simulation results under the complex covariate structure $g_2(\bX_i)$.}
\label{tab:sim_g2}
\begingroup
\footnotesize
\setlength{\tabcolsep}{1.5pt}
\renewcommand{\arraystretch}{1.08}
\resizebox{\textwidth}{!}{%
\begin{tabular}{@{}l *{18}{r}@{}}
\toprule
& \multicolumn{6}{c}{$\theta_0=0.5$} & \multicolumn{6}{c}{$\theta_0=1.0$} & \multicolumn{6}{c}{$\theta_0=2.0$} \\
\cmidrule(lr){2-7}\cmidrule(lr){8-13}\cmidrule(lr){14-19}
& \multicolumn{3}{c}{$n=1{,}000$} & \multicolumn{3}{c}{$n=5{,}000$} & \multicolumn{3}{c}{$n=1{,}000$} & \multicolumn{3}{c}{$n=5{,}000$} & \multicolumn{3}{c}{$n=1{,}000$} & \multicolumn{3}{c}{$n=5{,}000$} \\
\cmidrule(lr){2-4}\cmidrule(lr){5-7}\cmidrule(lr){8-10}\cmidrule(lr){11-13}\cmidrule(lr){14-16}\cmidrule(lr){17-19}
\textbf{Method} & \textbf{Bias} & \textbf{RMSE} & \textbf{CP} & \textbf{Bias} & \textbf{RMSE} & \textbf{CP} & \textbf{Bias} & \textbf{RMSE} & \textbf{CP} & \textbf{Bias} & \textbf{RMSE} & \textbf{CP} & \textbf{Bias} & \textbf{RMSE} & \textbf{CP} & \textbf{Bias} & \textbf{RMSE} & \textbf{CP} \\
\midrule
\multicolumn{19}{l}{\textbf{Methods for comparison}} \\
\midrule
Poisson GLM & $-0.0081$ & 0.0967 & \textemdash & $-0.0163$ & 0.0522 & \textemdash & $-0.0176$ & 0.0923 & \textemdash & $-0.0221$ & 0.0548 & \textemdash & $-0.0207$ & 0.0961 & \textemdash & $-0.0292$ & 0.0597 & \textemdash \\
Naive NN & $\phantom{-}0.0181$ & 0.0572 & \textemdash & $\phantom{-}0.0059$ & 0.0229 & \textemdash & $\phantom{-}0.0122$ & 0.0492 & \textemdash & $\phantom{-}0.0040$ & 0.0215 & \textemdash & $-0.0333$ & 0.0649 & \textemdash & $-0.0186$ & 0.0432 & \textemdash \\
AIPW (GLM + logistic) & $\phantom{-}0.0077$ & 0.1179 & \textemdash & $-0.0003$ & 0.0662 & \textemdash & $\phantom{-}0.0039$ & 0.1125 & \textemdash & $-0.0001$ & 0.0684 & \textemdash & $\phantom{-}0.0078$ & 0.1114 & \textemdash & $\phantom{-}0.0012$ & 0.0664 & \textemdash \\
AIPW (GLM + RF) & $\phantom{-}0.0249$ & 0.0909 & \textemdash & $\phantom{-}0.0216$ & 0.0493 & \textemdash & $\phantom{-}0.0193$ & 0.0822 & \textemdash & $\phantom{-}0.0215$ & 0.0484 & \textemdash & $\phantom{-}0.0209$ & 0.0833 & \textemdash & $\phantom{-}0.0219$ & 0.0481 & \textemdash \\
\midrule
\multicolumn{19}{l}{\textbf{DML with logistic propensity scores}} \\
\midrule
DML ($K\!=\!2$) & $\phantom{-}0.0022$ & 0.0736 & 0.9273 & $-0.0003$ & 0.0325 & 0.9434 & $\phantom{-}0.0066$ & 0.1309 & 0.9475 & $-0.0005$ & 0.0319 & 0.9394 & $\phantom{-}0.0051$ & 0.0726 & 0.9333 & $-0.0003$ & 0.0374 & 0.9293 \\
DML ($K\!=\!5$) & $\phantom{-}0.0008$ & 0.0719 & 0.9434 & $-0.0010$ & 0.0312 & 0.9374 & $-0.0014$ & 0.0669 & 0.9616 & $-0.0011$ & 0.0289 & 0.9354 & $\phantom{-}0.0051$ & 0.0648 & 0.9273 & $\phantom{-}0.0004$ & 0.0359 & 0.9414 \\
DML (true $g$) & $-0.0001$ & 0.0492 & 0.9495 & $-0.0006$ & 0.0215 & 0.9475 & $-0.0028$ & 0.0428 & 0.9576 & $-0.0000$ & 0.0204 & 0.9394 & $\phantom{-}0.0001$ & 0.0408 & 0.9333 & $\phantom{-}0.0007$ & 0.0185 & 0.9455 \\
\midrule
\multicolumn{19}{l}{\textbf{DML with random forest propensity scores}} \\
\midrule
DML ($K\!=\!2$) & $\phantom{-}0.0109$ & 0.0811 & 0.9152 & $\phantom{-}0.0004$ & 0.0347 & 0.9253 & $\phantom{-}0.0060$ & 0.0731 & 0.9455 & $\phantom{-}0.0008$ & 0.0326 & 0.9414 & $\phantom{-}0.0077$ & 0.0755 & 0.9152 & $-0.0000$ & 0.0364 & 0.9374 \\
DML ($K\!=\!5$) & $\phantom{-}0.0094$ & 0.0730 & 0.9414 & $\phantom{-}0.0010$ & 0.0365 & 0.9414 & $\phantom{-}0.0053$ & 0.0643 & 0.9475 & $\phantom{-}0.0004$ & 0.0309 & 0.9374 & $\phantom{-}0.0051$ & 0.0619 & 0.9313 & $-0.0001$ & 0.0339 & 0.9495 \\
DML (true $g$) & $-0.0000$ & 0.0466 & 0.9535 & $-0.0007$ & 0.0212 & 0.9394 & $-0.0027$ & 0.0414 & 0.9657 & $-0.0002$ & 0.0200 & 0.9434 & $-0.0001$ & 0.0406 & 0.9333 & $\phantom{-}0.0007$ & 0.0181 & 0.9515 \\
\bottomrule
\end{tabular}%
}
\endgroup
\par\smallskip
\parbox{\textwidth}{\footnotesize
Bias and RMSE are on the scale of $\theta_0$; CP is the empirical coverage probability of the nominal 95\% Wald confidence interval.
DML (true $g$) uses the true outcome nuisance from the data-generating process and five-fold cross-fitting to estimate propensity scores.
}
\end{table}

Under the simple covariate structure $g_1$, the linearly adjusted Poisson GLM generally showed the largest absolute bias, and its bias increased as the treatment effect increased.
The naive NN estimator reduced the bias relative to the GLM but generally remained more biased than the DML estimators.
This suggests that orthogonalization and cross-fitting in DML helped reduce the remaining bias associated with machine-learning nuisance estimation.
Compared with the AIPW estimators, the DML estimators using five-fold cross-fitting had lower RMSE in every setting.
This pattern held for both logistic regression and random forest propensity score models.
In general, the DML estimators had small bias, RMSE that decreased as the sample size increased, and coverage close to the nominal level.

Under the more complex covariate structure $g_2$, the five-fold DML estimators remained nearly unbiased and had lower RMSE than both AIPW estimators in every setting, while maintaining coverage close to the nominal level.
Within DML, propensity models using logistic regression and random forest had similar results.
Differences between two-fold and five-fold cross-fitting were generally modest, although the latter often produced marginally lower RMSE.
Notably, DML using the true outcome nuisance generally achieved the lowest RMSE and small bias (Tables~\ref{tab:sim_g1} and~\ref{tab:sim_g2}), suggesting the importance of reliable estimation of nuisance functions even when an orthogonal estimating equation is used.

\section{Empirical study} \label{sec:emp}

In this section, we apply the proposed DML estimator to the synthetic CCASAnet data and compare the estimated treatment effect with those from the Poisson GLM and AIPW estimators.

\subsection{Data description}

We applied our method to a synthetic dataset from the Caribbean, Central and South America Network for HIV Research (CCASAnet), a collaboration that harmonizes clinical HIV cohort data across sites in Latin America and the Caribbean.
The cohort consisted of 49,606 adults living with HIV from six countries as of December 2024 and spanned more than 30 years of follow-up.
The synthetic data closely reproduced longitudinal distributions, missingness patterns, time-to-event estimates, and risk-factor relationships in the original CCASAnet cohort without disclosing sensitive patient-level records \citep{liang26HIVsyn}.

The treatment of interest was dolutegravir-based antiretroviral therapy (ART). Treatment was classified as DTG-based if the initial regimens included DTG and as non-DTG otherwise.
The target population comprised ART-naive patients whose first recorded ART had a documented start date on or after January 1, 2013, the year when DTG was approved by the FDA.
After excluding patients with incomplete baseline covariates, the analytic sample consisted of 7,141 patients: 2,223 initiated a DTG-based regimen and 4,918 initiated a non-DTG regimen.

The primary outcome was the number of ADEs occurring within five years after ART initiation.
We retained patients with less than five years of recorded follow-up, with observed follow-up duration as an adjusted covariate in all models to account for the difference in observation time.
Finally, our \textbf{target causal estimand} is the conditional effect of DTG-based ART versus non-DTG ART on the count of the ADEs within five years.
Specifically, it is the ratio of the expected five-year ADE counts under the two treatment regimens, conditional on covariates.

The treatment and outcome models used the same set of baseline characteristics and demographic information, including baseline CD4 count, age, school level, sex, country/clinical center, HIV acquisition mode, and time of ART initiation. 
Descriptive statistics for these baseline covariates are reported in Appendix~\ref{sec:emp_appendix}, Table~\ref{tab:emp_baseline}.

\subsection{Model fitting and hyperparameter tuning}

We compared our DML estimator with the Poisson GLM and AIPW models.
We followed an implementation similar to that used in the simulation study in Section~\ref{sec:sim_methods}.

We used five-fold cross-fitting for all DML nuisance models. 
For each of the five outer cross-fitting folds, we tuned the hyperparameters within the outer training set using an 80\%/20\% inner training-validation split for the neural network and random forest models.
Using the selected hyperparameters, we refitted the models on the complete outer training set and used them to predict nuisance functions for the held-out outer fold.

For AIPW, the outcome model was estimated using a full-sample Poisson GLM, and the propensity score was estimated using five-fold cross-fitted logistic regression or random forests.
Bootstrap was used to compute the 95\% confidence intervals. 
For AIPW with logistic regression, the estimated propensity scores were truncated to $[0.01, 0.99]$ to mitigate the influence of extreme propensity scores.

\subsection{Results}

Table~\ref{tab:empirical} presents the estimated treatment effects on the log scale and their exponentiated values, interpreted as conditional five-year expected count ratios.

\begin{table}[t]
\caption{\textbf{Estimated DTG-Based Treatment Effect on Five-Year AIDS-Defining Event Counts:} Logistic regression (logistic) and random forest (RF) are used for the propensity score models. Point estimates and 95\% confidence intervals are reported for $\theta$ and $\exp\{\theta\}$, with the latter being the conditional five-year expected count ratio comparing DTG and non-DTG treatment. Wald confidence intervals are reported for Poisson GLM and DML; non-parametric bootstrap quantile intervals are reported for AIPW. }
\label{tab:empirical}
\begin{center}
\small
\setlength{\tabcolsep}{5pt}
\setlength{\heavyrulewidth}{1pt}
\setlength{\lightrulewidth}{0.6pt}
\renewcommand{\arraystretch}{1.12}
\begin{tabular*}{\textwidth}{@{\extracolsep{\fill}}lccccc@{}}
\toprule
Method
& $\hat{\theta}$
& SE
& 95\% CI for $\theta$
& $\exp\{\hat{\theta}\}$
& 95\% CI for $\exp\{\theta\}$ \\
\midrule
Poisson GLM
& $0.354$ & $0.162$ & $[0.037,\,0.671]$
& $1.425$ & $[1.038,\,1.957]$ \\
AIPW (logistic)
& $0.126$ & $0.413$ & $[-0.857,\,0.790]$
& $1.134$ & $[0.424,\,2.204]$ \\
AIPW (RF)
& $0.326$ & $0.170$ & $[-0.010,\,0.658]$
& $1.385$ & $[0.990,\,1.931]$ \\
DML (logistic)
& $0.305$ & $0.158$ & $[-0.004,\,0.614]$
& $1.357$ & $[0.996,\,1.848]$ \\
DML (RF)
& $0.258$ & $0.155$ & $[-0.046,\,0.561]$
& $1.294$ & $[0.955,\,1.752]$ \\
\bottomrule
\end{tabular*}
\end{center}
\end{table}

The two DML estimates, 1.357 (95\% CI: 0.996, 1.848) with the logistic propensity score model and 1.294 (95\% CI: 0.955, 1.752) with the RF propensity score model, were similar to each other.
These estimates corresponded to 35.7\% and 29.4\% higher expected five-year ADE counts for DTG-based regimens than for non-DTG regimens.
However, both confidence intervals included the null value of 1.
Compared with the Poisson GLM estimate of 1.425 (95\% CI: 1.038, 1.957), the point estimates of DML methods were closer to the null, while their confidence intervals suggested insufficient evidence that DTG was associated with higher ADE counts than non-DTG regimens, contrary to the Poisson GLM results.

The DML estimates had smaller estimated standard errors than the AIPW estimates for both propensity score models.
The precision gain was particularly large for the estimates using the logistic propensity model, as the estimated standard error decreased from 0.413 for AIPW to 0.158 for DML.
This difference may suggest that the DML estimator is more stable in the presence of extreme propensity scores, because DML does not invert the propensity scores to construct inverse probability weights.
The distributions of the raw and truncated inverse probability weights from the logistic regression and the random forest models are presented in Appendix~\ref{sec:emp_weight_distribution}.

\section{Conclusion}

We presented our DML estimator for the treatment effect coefficient in a partially linear Poisson model.
Using simulations, we demonstrated that DML estimators had smaller bias and decreasing RMSE in most settings.
More importantly, coverage was near the nominal 95\% level, supporting the validity of statistical inference for our estimator.
In the synthetic CCASAnet application, DML found insufficient evidence that DTG was associated with higher five-year ADE counts, in contrast to the Poisson GLM. 
DML also provided more precise estimates across propensity score models and suggested greater robustness to extreme propensity scores than AIPW.

Some limitations may affect the interpretation and generalizability of the CCASAnet analysis.
First, loss to follow-up may bias the estimated treatment effect if associated with patient characteristics or evolving health status. 
As a result, inverse probability of censoring weighting could be incorporated to address this potential bias.
Second, although our model assumes a conditional Poisson distribution, the unbiasedness and Neyman orthogonality of the proposed score function depend only on the conditional mean, and do not require the mean to equal its variance. 
In addition, the empirical sandwich variance estimator allows misspecification of the conditional variance while still providing valid variance estimation asymptotically.
However, overdispersion may reduce its efficiency relative to estimators that appropriately incorporate the correct conditional variance.
Future work could develop an orthogonal estimating equation based on negative binomial regression or other flexible count models.
Third, the partially linear model assumes a constant conditional count mean ratio across covariate values. 
This restriction provides a parsimonious model for treatment effect estimation, but does not allow treatment heterogeneity. 
Future work could incorporate prespecified treatment-effect modification by allowing a subset of the covariates to interact with the treatment.

\bibliographystyle{apalike}
\bibliography{dml_poisson}

\clearpage
\appendix
\section*{Appendix}

\section{Construction of the Neyman-orthogonal score}
\label{sec:scoreConstruction}
This section shows the construction of the Neyman-orthogonal score in equation~(\ref{eq:score}), where we follow the concentrating-out approach for infinite-dimensional nuisance parameters in Section 2.2.3 of \cite{chernozhukov18double}.

Let the log-likelihood from the partially linear Poisson model be the criterion function
\begin{align*}
    \ell\left(W_i;\theta,g\right)=Y_i\left(\theta A_i+g\left(\bX_i\right)\right)-e^{\theta A_i+g\left(\bX_i\right)},
\end{align*}
where $\theta$ and $g$ are the target and nuisance parameters taking values in parameter spaces $\Theta$ and $G$, respectively. Define the concentrated-out nonparametric component by
\begin{align*}
    g_\theta\left(\bX_i\right)=\underset{g\in G}{\operatorname{arg\,max}}\;\EE\left[\ell\left(W_i;\theta,g\right)\mid\bX_i\right].
\end{align*}
For a fixed $\bX_i$, the following $g_\theta$ maximizes $\EE\left[\ell\left(W_i;\theta,g\right)\mid\bX_i\right]$:
\begin{align*}
    g_\theta\left(\bX_i\right)=\log\EE\left[Y_i\mid\bX_i\right]-\log\EE\left[e^{\theta A_i}\mid\bX_i\right].
\end{align*}
Its partial derivative with respect to $\theta$ is
\begin{align*}
    \frac{\partial g_\theta\left(\bX_i\right)}{\partial\theta}=-\frac{\EE\left[A_i e^{\theta A_i}\mid\bX_i\right]}{\EE\left[e^{\theta A_i}\mid\bX_i\right]}:=-\mu_\theta\left(\bX_i\right).
\end{align*}

Now consider the following score function:
\begin{align*}
\psi\left(W_i;\theta,\eta\right)=\frac{d\ell\left(W_i;\theta,\eta\left(\theta\right)\right)}{d\theta},
\end{align*}
where $d/d\theta$ denotes the full derivative with respect to $\theta$, including differentiation through the nuisance path $\eta\,(\theta)$. For our model, this becomes
\begin{align*}
    \frac{d\ell\left(W_i;\theta,g_\theta\right)}{d\theta}
    &=\partial_\theta\ell\left(W_i;\theta,g_\theta\right)
    +\partial_g\ell\left(W_i;\theta,g_\theta\right)
    \frac{\partial g_\theta\left(\bX_i\right)}{\partial\theta}.
\end{align*}
Using the Poisson log-likelihood as the criterion function gives
\begin{align*}
    \frac{d\ell\left(W_i;\theta,g_\theta\right)}{d\theta}
    &=A_i\left(Y_i-e^{\theta A_i+g_\theta\left(\bX_i\right)}\right)
    -\mu_\theta\left(\bX_i\right)
    \left(Y_i-e^{\theta A_i+g_\theta\left(\bX_i\right)}\right)\\
    &= \left(A_i-\mu_\theta\left(\bX_i\right)\right)\left(Y_i-e^{\theta A_i+g_\theta\left(\bX_i\right)}\right),
\end{align*}
which is the score in equation~(\ref{eq:score}).
The orthogonality of this score is formally proved in Proposition~\ref{prop:orthogonal}.

\section{DML estimation and inference algorithm} \label{sec:dmlalgorithm}

Algorithm~\ref{alg:poissondml} summarizes the nuisance estimation, root-finding, and parameter inference algorithm for the DML procedure described in Section~\ref{sec:procedure}.
The held-out fold is used only to construct out-of-sample nuisance predictions.

\begin{algorithm}[H]
\caption{Cross-fitted Poisson DML estimation and inference}
\label{alg:poissondml}
\small
\begin{algorithmic}[1]
\State \textbf{Input:} Data $W=\{(Y_i,A_i,\bX_i)\}_{i=1}^n$; number of folds $K$; tolerance $\delta$; maximum iterations $M$; significance level $\alpha$.

\Statex \vspace{0.5ex}
\Statex \textbf{Part 1: Cross-fitting for nuisance estimation}
\State Randomly partition $W$ into $K$ nonoverlapping held-out samples
$W_{\mathrm{hold}}^{(1)},\ldots,W_{\mathrm{hold}}^{(K)}$.
\State Initialize $\hat g=(0,\ldots,0)$, $\hat\pi=(0,\ldots,0)$, and an empty list $L$.
\For{$k=1,\ldots,K$}
    \State Set $W_{\mathrm{hold}}\gets W_{\mathrm{hold}}^{(k)}$ and
    $W_{\mathrm{train}}\gets W \setminus W_{\mathrm{hold}}$.
    \State Tune the outcome model and propensity score model using only $W_{\mathrm{train}}$, when applicable.
    \State Fit the partially linear Poisson outcome model on $W_{\mathrm{train}}$ to obtain $\tilde\theta_k$ and $\hat g^{(k)}$.
    \State Append $\tilde\theta_k$ to $L$ and use $\hat g^{(k)}$ to predict
    $\hat g_i$ for every observation in $W_{\mathrm{hold}}$.
    \State Fit the propensity score model on $W_{\mathrm{train}}$ to obtain $\hat\pi^{(k)}(\bX)=\widehat{\PP}(A=1\mid\bX)$.
    \State Use $\hat\pi^{(k)}$ to predict and truncate $\hat\pi_i$ for every observation in
    $W_{\mathrm{hold}}$.
\EndFor

\Statex \vspace{0.5ex}
\Statex \textbf{Part 2: Root-finding for the estimating equation}
\State Initialize $\theta^{(0)}\gets K^{-1}\sum_{k=1}^K\tilde\theta_k$ and $m\gets0$.
\Repeat
    \State For every $i$, compute
    \Statex \hspace{2em}$\displaystyle
        \hat\lambda_i^{(m)}=e^{\theta^{(m)}A_i+\hat g_i},\qquad
        \hat\mu_i^{(m)}=\frac{\hat\pi_i e^{\theta^{(m)}}}{1-\hat\pi_i+\hat\pi_i e^{\theta^{(m)}}},\qquad
        \hat\psi_i^{(m)}=(A_i-\hat\mu_i^{(m)})(Y_i-\hat\lambda_i^{(m)}).$
    \State Compute the empirical score $\displaystyle G_n^{(m)}=n^{-1}\sum_{i=1}^n\hat\psi_i^{(m)}$.
    \State Compute its derivative
    \Statex \hspace{2em}$\displaystyle
        \hat J^{(m)}\gets\frac{1}{n}\sum_{i=1}^n
        \left[-\hat\mu_i^{(m)}(1-\hat\mu_i^{(m)})(Y_i-\hat\lambda_i^{(m)})
        -(A_i-\hat\mu_i^{(m)})A_i\hat\lambda_i^{(m)}\right].$
    \Statex \vspace{1mm} Update $\theta^{(m+1)}\gets\theta^{(m)}- G_n^{(m)}/\hat J^{(m)}$ and $m\gets m+1$.

\Until{$\displaystyle |\theta^{(m)}-\theta^{(m-1)}|<\delta$ or $m=M$}

\Statex \vspace{0.5ex}
\Statex \textbf{Part 3: Statistical inference using a Wald confidence interval}
\State Set $\hat\theta\gets\theta^{(m)}$ and evaluate $\hat\psi_i=\hat\psi_i(\hat\theta)$ and $\hat J=\hat J(\hat\theta)$.
\State Compute $\displaystyle \hat\sigma^2\gets\hat J^{-2}n^{-1}\sum_{i=1}^n\hat\psi_i^2$ and $\displaystyle \widehat{\operatorname{se}}(\hat\theta)\gets\sqrt{\hat\sigma^2/n}$.
\State Construct $\displaystyle \mathrm{CI}_{1-\alpha}(\theta_0)\gets\hat\theta\pm z_{1-\alpha/2}\widehat{\operatorname{se}}(\hat\theta)$.
\State \textbf{Output:} $\hat\theta$, $\exp\{\hat\theta\}$, $\widehat{\operatorname{se}}(\hat\theta)$, and the Wald confidence interval (with exponentiated endpoints for $\exp\{\theta_0\}$).
\end{algorithmic}
\end{algorithm}

\section{Proof of Neyman orthogonality for the estimating equation}
\label{sec:proofOrthogonal}

We use the definition of Neyman orthogonality in Equation 2.3 of \cite{chernozhukov18double}.
To prove that our estimating equation in equation~(\ref{eq:score}) is Neyman orthogonal, we need to verify the following conditions:

\begin{enumerate}[label=(\alph*)]
    \item $\EE[\psi(W_i; \theta_0,\eta_0)]=0$.
    \item The pathwise derivative map $\text{D}_r[\eta-\eta_0]$ defined in \cite{chernozhukov18double} exists for all $r\in [0, 1)$ and $\eta\in \mathcal T_N$ and vanishes at $r=0$; 
    namely, 
    $$
    \partial_\eta\,\EE[\psi(W_i; \theta_0,\eta_0)][\eta-\eta_0]=0.
    $$
\end{enumerate}

\begin{proof}
\label{proof:orthogonal}
\begin{enumerate}[label=\textsc{Step \arabic*.},
                  leftmargin=4em,
                  labelsep=0.5em,
                  itemsep=5pt,
                  topsep=0.5\baselineskip]
\item The score function in our study is $\psi(W_i; \theta, \eta_0) = (A_i - \mu_0)(Y_i - \exp\{\theta A_i + g_0\}).$
Let $\lambda_0=\exp\{\theta_0A_i+g_0\}$. The expectation of this score function is
\begin{align*}
\EE(\psi(W_i; \theta_0, \eta_0)) &= \EE(\EE((A_i - \mu_0)(Y_i - \lambda_0)|\bX_i)) \\
&= \EE(\EE(AY - A_i\lambda_0 - \mu_0 Y_i + \mu_0 \lambda_0 | \bX_i)) \\
&= \EE(\EE(AY|\bX_i)-\EE(A_i\lambda_0|\bX_i)-\EE(\mu_0Y_i|\bX_i)+\EE(\mu_0\lambda_0|\bX_i)) \\
&= \EE(0)=0.
\end{align*}

The first equality uses the law of iterated expectation. The third equality holds because
\begin{align*}
\EE(AY | \bX_i) &= \EE(A_i \cdot \EE(Y_i | A_i, \bX_i) | \bX_i) = \EE(A_i \lambda_0 | \bX_i),\\
\EE(\mu_0 Y_i | \bX_i) &= \EE(\mu_0\cdot \EE(Y_i | A_i, \bX_i) | \bX_i) = \EE(\mu_0 \lambda_0 | \bX_i).
\end{align*}

\item We want to show that $\partial_\eta\,\EE(\psi(W_i; \theta_0, \eta_0))[\eta - \eta_0]=0$.

Using the definition of Neyman orthogonality from \cite{chernozhukov18double}, we define $g_r=g_0+r(g-g_0)$ and $\mu_r=\mu_0+r(\mu-\mu_0)$. The following holds:

\begin{align} \label{eq:part.dev}
    &\partial_\eta\,\EE[\psi(W_i; \theta_0, \eta_0)]\left[\eta - \eta_0\right] \nonumber \\
    &\quad =\big.\partial_r \EE\Big(\psi(W_i;\theta_0,\eta_0+r(\eta-\eta_0)\Big)\big|_{r=0} \nonumber \\
    &\quad = \left.\partial_r\,\EE\left((A_i-\mu_r)(Y_i-e^{\theta_0A_i+g_r})\right)\right|_{r=0} \nonumber \\
    &\quad = \left.\EE\left(-(\mu-\mu_0)(Y_i-e^{\theta_0A_i+g_r})+(A_i-\mu_r)(-e^{\theta_0A_i+g_r})(g-g_0))\right)\right|_{r=0} \nonumber \\
    &\quad = \EE\left(-(\mu-\mu_0)(Y_i-e^{\theta_0A_i+g_0}) + (A_i-\mu_0)(-e^{\theta_0A_i+g_0})(g-g_0)\right) \nonumber \\
    &\quad = \underbrace{\EE\left(-(\mu-\mu_0)(Y_i-e^{\theta_0A_i+g_0})\right)}_{\text{Term }1} + \underbrace{\EE\left((A_i-\mu_0)(-e^{\theta_0A_i+g_0})(g-g_0)\right)}_{\text{Term }2}.
\end{align}

Since $\EE(Y_i\mid A_i,\bX_i) - \lambda_0 = 0$, Term 1 in equation~(\ref{eq:part.dev}) becomes
\begin{align*}
    \EE\left[-(\mu - \mu_0)(Y_i - e^{\theta_0 A_i + g_0})\right]
    &= \EE_{\bX_i}\Big[\EE_{A_i}\big(\EE_{Y_i}(-(\mu - \mu_0)(Y_i - \lambda_0)\mid A_i, \bX_i)\big)\Big] \\
    &= \EE_{\bX_i}\Big[\EE_{A_i}\big(-(\mu - \mu_0) \EE_{Y_i}(Y_i - \lambda_0\mid A_i, \bX_i)\big)\Big] = 0.
\end{align*}
The second equality uses the law of iterated expectation.

In Term 2 in equation~(\ref{eq:part.dev}), we have $\mu_0=\EE[A_i\lambda_0|\bX_i]/\,\EE[\lambda_0|\bX_i]$, so
\begin{align*}
\EE[(A_i - \mu_0)\lambda_0\mid \bX_i] &= \EE[A_i\lambda_0\mid \bX_i] - \EE[\mu_0\lambda_0\mid \bX_i] \\
&= \EE[A_i\lambda_0|\bX_i] - \frac{\EE[A_i\lambda_0|\bX_i]}{\EE[\lambda_0|\bX_i]} \cdot \EE[\lambda_0|\bX_i] = 0.
\end{align*}

Thus, Term 2 becomes
\begin{align*}
\EE[-(g - g_0)(A_i - \mu_0)e^{\theta_0A_i+g_0}] &= \EE_{\bX_i}[\EE_{A_i}[-(g - g_0)(A_i - \mu_0)\lambda_0\mid \bX_i]] \\
&= \EE_{\bX_i}[-(g - g_0) \underbrace{\EE_{A_i}[(A_i - \mu_0)\lambda_0\mid \bX_i]}_{=0}] = 0.
\end{align*}

Hence, we have proven Proposition \ref{prop:orthogonal} and shown that our score function obeys the Neyman-orthogonality condition.

\end{enumerate}
\end{proof}

\section{Conditions to verify for the DML asymptotic normality}
\label{sec:conditions}
Below, we state the conditions used to establish the asymptotic normality of the DML estimator based on the estimating equation in equation~(\ref{eq:score}).
These conditions correspond to Assumptions 3.3 and 3.4 in \cite{chernozhukov18double}.

\begin{condition}[Moment Condition with Neyman Orthogonality]
\label{cond:a1}
Let $c_0, c_1>0$ be finite constants. The following conditions hold:
\begin{enumerate}[label=(\alph*)]
    \item $\EE(\psi(W_i; \theta_0, \eta_0)) = 0$ and $\Theta$ contains an interval of length $c_1(n^{-1/2}\log n)$ centered at $\theta_0$.

    \item The map $(\theta, \eta) \mapsto \EE[\psi(W_i; \theta, \eta)]$ is twice continuously Gateaux-differentiable.

    \item $2|\EE\psi(W_i; \theta, \eta_0)|\geq \min\{|J_0(\theta-\theta_0)|, c_0\}$, where $\eta_0=\{g_0, \mu_0\}$, $0<c_0<\infty$, \\
    and $0<|J_0|=|\partial_\theta\EE\psi(W_i; \theta,\eta_0)\mid_{\theta=\theta_0}|<\infty$.

    \item The score function $\psi(W_i; \theta_0, \eta_0)$ obeys Neyman-orthogonality-- i.e.,
    \begin{equation*}
    \partial_{\eta}\EE\psi(W_i; \theta_0, \eta_0)[ \eta-\eta_0] = 0 \text{ for all } \eta \in \text{\Large$\varepsilon$},
    \end{equation*}
    where \text{\Large$\varepsilon$} is the parameter space of $\eta$.
\end{enumerate}
\end{condition}

\begin{condition}[Score Regularity and Quality of Estimation of Nuisance Parameters]
\label{cond:a2}
The following conditions hold:
\begin{enumerate}[label=(\alph*)]
    \item $\hat{\eta}$ belongs to the realization set $\mathcal{T}_n$ for each $k \in \{1, 2, \ldots, K\}$ with probability approaching 1, where $\mathcal{T}_n$ satisfies $\eta_0 \in \mathcal{T}_n$.

    \item The space of $\theta$, $\Theta$, is bounded, and for each $\eta \in \mathcal{T}_n$, the functional class $\mathcal{F}_{\eta} = \{\psi(\cdot; \theta, \eta): \theta \in \Theta\}$ is $P$-measurable and its uniform covering number satisfies the following condition: there exist positive constants $R$ and $\nu$ such that
    \begin{align*}
    \sup_{Q} \log \mathcal{N}\Big(
    \varepsilon \|F_{\eta}(\cdot)\|_{Q,2}, \mathcal{F}_{\eta}, \|\cdot\|_{Q,2}\Big)
    \leq \nu \log(R/\varepsilon),
    \quad \forall \varepsilon \in (0,1),
    \end{align*}
    where $F_{\eta}(\cdot)$ is a measurable envelope function of $\mathcal{F}_{\eta}$ such that $F_{\eta}(W_i) \geq |\psi(W_i;\theta,\eta)|$ for all $W_i$ and $\theta \in \Theta$, and there exists $q > 2$ such that $\|F_{\eta}\|_{P,q} = O(1)$.

    \item There exists $\tau_n$, a sequence of positive constants converging to zero, such that
    \begin{enumerate}[label=\roman*.]
        \item $\sup_{\eta \in \mathcal{T}_n,\, \theta \in \Theta} \Big|\EE\psi(W_i; \theta, \eta) - \EE\psi(W_i; \theta, \eta_0)\Big| = o(\tau_n)$,

        \item $\sup_{\eta \in \mathcal{T}_n,\, |\theta-\theta_0|\leq\tau_n}
        \EE \Big[\big(\psi(W_i; \theta,\eta)-\psi(W_i; \theta_0, \eta_0) \big)^2\Big]=o(1)$,

        \item $\sup_{\substack{r \in (0,1),\,\eta \in \mathcal{T}_n,|\theta-\theta_0|\leq\tau_n}}\Big| \partial^2_r\EE\psi\big(W_i;\theta_0+r(\theta-\theta_0),\eta_0+r(\eta-\eta_0)\big) \Big|=o(n^{-1/2}).$
    \end{enumerate}

    \item The second moment of the score function is finite and non-degenerate: $0<\EE(\psi^2(W_i; \theta_0, \eta_0))<\infty$.
\end{enumerate}
\end{condition}

The verification of Conditions~\ref{cond:a1} and~\ref{cond:a2} under Assumptions~\ref{ass:reg} and~\ref{ass:ml} is provided in Appendix~\ref{sec:proofNormal}, followed by the proof of Theorem~\ref{thm:normal}.

\section{Verification of conditions for asymptotic normality and proof of normality for the DML estimator}
\label{sec:proofNormal}
\textbf{Verification of Condition \ref{cond:a1}}

\begin{enumerate}[label=\textsc{Step \arabic*.},
                  leftmargin=5em,
                  labelsep=0.5em,
                  itemsep=5pt,
                  topsep=0.5\baselineskip]
\item First, we verify Condition \ref{cond:a1}(a). We have already shown that $\EE\psi(W_i; \theta_0, \eta_0)=0$ in Step 1 of Appendix~\ref{proof:orthogonal}.
In addition, by Assumption \ref{ass:reg}(a), the parameter space $\Theta$ contains an interval of length $c_1(n^{-1/2}\log n)$, where $c_1>0$ is a finite constant.

\item Next, we want to show that $\partial^2_\eta\, \EE(\psi)[\Delta\eta_1, \Delta\eta_2]$ is continuous, where
\begin{align*}
\Delta\eta_1 = \{h_{g,1}, h_{\mu,1}\} \quad \text{and} \quad \Delta\eta_2 = \{h_{g,2}, h_{\mu,2}\}
\end{align*}
are two directions of $\eta$ such that $h_{g,\cdot} = [g-g_0]_.$ and $h_{\mu,\cdot} = [\mu-\mu_0]_.$.

The first derivative with respect to $\eta$ is
\begin{align*}
\partial_\eta\,\EE(\psi)[\Delta\eta_1] 
&= \EE\left[\partial_g \psi \, h_{g,1} + \partial_u \psi \, h_{\mu,1}\right] \\
&= \EE\left[-(A_i-\mu_0)e^{\theta_0 A_i + g_0} h_{g,1} - (Y_i-e^{\theta_0 A_i + g_0}) h_{\mu,1}\right]
\end{align*}

The second derivative with respect to $\eta$ in a different direction is
\begin{align*}
\partial^2_\eta\, \EE(\psi)[\Delta\eta_2] 
&= \EE[h_{\mu,2} e^{\theta_0 A_i + g_0} h_{g,1} - h_{g,2}\underbrace{(A_i-\mu_0)e^{\theta_0 A_i + g_0}}_{=0|\bX_i} h_{g,1} + h_{g,2}e^{\theta_0 A_i + g_0} h_{\mu,1} + 0] \\
&= \EE[e^{\theta_0 A_i + g_0} \cdot (h_{\mu,2} \cdot h_{g,1} + h_{g,2} \cdot h_{\mu,1})].
\end{align*}

By Assumption \ref{ass:reg}(a) and \ref{ass:reg}(b), $\theta$ and $A_i$ are bounded; the second derivative $\partial^2_\eta\, \EE(\psi)$ is smooth since the exponential function is smooth.
Therefore, the map $(\theta, \eta) \mapsto \EE[\psi(W_i; \theta, \eta)]$ is twice continuously Gateaux differentiable.

\item We apply a first-order Taylor expansion to $\EE[\psi(W_i;\theta,\eta_0)]$ around $\theta_0$:
$$
\EE[\psi(W_i;\theta,\eta_0)]
=\EE[\psi(W_i;\theta_0,\eta_0)]
+J_0(\theta-\theta_0)+R(\theta),
$$
where
$$
J_0=\partial_\theta \EE[\psi(W_i;\theta,\eta_0)]\big|_{\theta=\theta_0}
\quad\text{and}\quad
R(\theta)=o(|\theta-\theta_0|).
$$
By Condition \ref{cond:a1}(a), $\EE[\psi(W_i;\theta_0,\eta_0)]=0$, hence
$$
\EE[\psi(W_i;\theta,\eta_0)] = J_0(\theta-\theta_0)+R(\theta).
$$
By the definition of $R(\theta)$, there exists $\delta>0$ such that for all $|\theta-\theta_0|\le\delta$,
$$
|R(\theta)|\le \tfrac12 |J_0|\,|(\theta-\theta_0)|.
$$
Therefore, for all $|\theta-\theta_0|\le\delta$, the reverse triangle inequality implies:
$$
2\left|\EE[\psi(W_i;\theta,\eta_0)]\right|
=2|J_0(\theta-\theta_0)+R(\theta)|
\ge 2|J_0|\,|\theta-\theta_0|-2|R(\theta)|
\ge |J_0|\,|\theta-\theta_0|.
$$

Next, by Condition \ref{cond:a1}(a)--(b), $\EE[\psi(W_i;\theta,\eta_0)]$ is continuous in $\theta$ on $\Theta$.
For a binary treatment, since 
$$
\EE[\psi(W_i;\theta,\eta_0)] =(e^{\theta_0}-e^\theta) \EE \left[ \frac{e^{g_0(X)}\pi_0(X)(1-\pi_0(X))} {1-\pi_0(X)+\pi_0(X)e^\theta} \right]
$$
has a strictly positive expectation factor and therefore vanishes only at $\theta=\theta_0$, we obtain that $\theta_0$ is the unique solution to $\EE[\psi(W_i;\theta,\eta_0)]=0$ on $\Theta$.

Fix the $\delta>0$ above and define
$$
c_0:=2\; \inf_{\theta\in\Theta:\,|\theta-\theta_0|\ge\delta}
\left|\EE[\psi(W_i;\theta,\eta_0)]\right|.
$$
Since $\{\theta\in\Theta:|\theta-\theta_0|\ge\delta\}$ is compact and the integrand is continuous,
the infimum is attained and is strictly positive, so $c_0>0$. Hence for all $\theta\in\Theta$ with $|\theta-\theta_0|\ge\delta$,
$$
2\left|\EE[\psi(W_i;\theta,\eta_0)]\right|\ge c_0.
$$
Combining the cases $|\theta-\theta_0|\le\delta$ and $|\theta-\theta_0|\ge\delta$ yields
$$
2\left|\EE[\psi(W_i;\theta,\eta_0)]\right|
\ge \min\{|J_0(\theta-\theta_0)|,\,c_0\}.
$$

\item To verify Condition~\ref{cond:a1}(d), we use Proposition~\ref{prop:orthogonal}, which gives $\partial_\eta\EE[\psi(W_i;\theta_0,\eta_0)][\eta-\eta_0]=0$.
\end{enumerate}

Hence, we have verified all the requirements for Condition \ref{cond:a1}.

\vspace{1em}
\noindent\textbf{Verification of Condition \ref{cond:a2}}
\begin{enumerate}[label=\textsc{Step \arabic*.},
                  leftmargin=5em,
                  labelsep=0.5em,
                  itemsep=5pt,
                  topsep=0.5\baselineskip]
\item We first verify Condition \ref{cond:a2}(a). By Assumption \ref{ass:ml}, the nuisance functions are uniformly consistent and achieve a specified $L_2$ convergence rate.
Here, $\mu$ denotes the collection $(\mu_\theta)_{\theta\in\Theta}$, where $\mu_\theta=\pi e^\theta/(1-\pi+\pi e^\theta)$ for a propensity score $\pi\in[0,1]$.
Since this transformation is uniformly Lipschitz in $\pi$ over compact $\Theta$, the convergence rates in Assumption~\ref{ass:ml} also hold for $\hat\mu_\theta$ uniformly over $\theta\in\Theta$.
Specifically, there exist $\xi_{1,n} = o(1)$ and $\xi_{2,n} = o(n^{-1/4})$ such that $\hat{\eta}$ belongs to

$$
\mathcal{T}_n=\left\{\eta=(g,\mu):
\begin{aligned}
    &\|g-g_0\|_\infty+\sup_{\theta\in\Theta}\|\mu_\theta-\mu_{0,\theta}\|_\infty\leq\xi_{1,n},\\
    &\|g-g_0\|_{P,2}+\sup_{\theta\in\Theta}\|\mu_\theta-\mu_{0,\theta}\|_{P,2}\leq\xi_{2,n}
\end{aligned}
\right\},
$$
with probability approaching 1, for each $k \in \{1, \ldots, K\}$. 
Since $\eta_0\in\mathcal{T}_n$, this verifies Condition~\ref{cond:a2}(a).

\item Then, we verify Condition \ref{cond:a2}(b), specifically that the functional class $\mathcal{F_\eta}=\{\psi(\cdot; \theta,\eta): \theta \in \Theta \}$ satisfies the uniform entropy condition.

We first show that there exists $q>2$ such that $\|F_\eta\|_{P,q}=O(1)$.
By Assumption~\ref{ass:reg}(a),(c), the binary treatment, and the restrictions defining $\mathcal{T}_n$, we have that $\exp\{\theta A_i+g\}$ is uniformly bounded over $\theta\in\Theta$ and $\eta\in\mathcal{T}_n$.
Since $|A_i-\mu_\theta|\leq1$, we have
\begin{align*}
|\psi(W_i;\theta,\eta)| &= |A_i-\mu_\theta|\,|Y_i-e^{\theta A_i+g}|\\
&\leq |Y_i|+e^{\theta A_i+g}.
\end{align*}

For the binary treatment, $\partial\mu_\theta/\partial\theta=\mu_\theta(1-\mu_\theta)$.
Therefore,
\begin{align*}
\frac{\partial}{\partial\theta}\psi(W_i;\theta,\eta) = -\mu_\theta(1-\mu_\theta)\left(Y_i-e^{\theta A_i+g}\right) -(A_i-\mu_\theta)A_i e^{\theta A_i+g}.
\end{align*}
Using $\mu_\theta(1-\mu_\theta)\leq1/4$, $|A_i-\mu_\theta|\leq1$, and $|A_i|\leq1$, we obtain
$$
\left|\frac{\partial}{\partial\theta}\psi(W_i;\theta,\eta)\right| \leq \frac14|Y_i|+\frac54e^{\theta A_i+g}.
$$
Therefore, there exists a sufficiently large fixed constant $C$ such that
$$
|\psi(W_i;\theta,\eta)| \leq C(1+|Y_i|)\qquad \text{and} \qquad \left|\frac{\partial}{\partial\theta}\psi(W_i;\theta,\eta)\right| \leq C(1+|Y_i|).
$$
Define
$$
m(W_i)=F_\eta(W_i)=C(1+|Y_i|).
$$
By Assumption~\ref{ass:reg}(b), there exists $q>2$ such that
$$
\|F_\eta\|_{P,q} \leq C(1+\|Y_i\|_{P,q})=O(1),
$$
uniformly over $\eta\in\mathcal{T}_n$.

Next, we establish the Lipschitz condition required by Example 19.7 of \cite{vandervaart00asymptotic}.
For any $\theta_1,\theta_2\in\Theta$, the mean value theorem (MVT) gives a $\theta^*$ between them such that
\begin{align*}
|\psi(W_i;\theta_1,\eta)-\psi(W_i;\theta_2,\eta)| &= \left|\frac{\partial}{\partial\theta}\psi(W_i;\theta^*,\eta)\right|\,|\theta_1-\theta_2|\\
&\leq m(W_i)|\theta_1-\theta_2|.
\end{align*}
The binary treatment defines $\mu_\theta$ smoothly throughout the interval between $\theta_1$ and $\theta_2$.

We now apply the Lipschitz bracketing bound.
Since the parameter $\theta$ is a scalar, its dimension is $d=1$.
For every finitely supported probability measure $Q$, $\|m\|_{Q,2}<\infty$.
Applying the bound with $Q$ in place of $P$ gives
$$
\mathcal{N}_{[]}\left(\varepsilon\|m\|_{Q,2},\mathcal{F}_\eta,\|\cdot\|_{Q,2}\right) \leq K\left(\frac{\operatorname{diam}(\Theta)}{\varepsilon}\right), \qquad 0<\varepsilon<\operatorname{diam}(\Theta),
$$
where $\mathcal{N}_{[]}$ denotes the bracketing number and $K$ can be chosen independently of $Q$ and $\eta$.

Since covering numbers are bounded by bracketing numbers, we have 
$$
\mathcal{N}\left(\varepsilon\|m\|_{Q,2},\mathcal{F}_\eta,\|\cdot\|_{Q,2}\right) \leq \mathcal{N}_{[]}\left(\varepsilon\|m\|_{Q,2},\mathcal{F}_\eta,\|\cdot\|_{Q,2}\right) \leq K\frac{\operatorname{diam}(\Theta)}{\varepsilon},
$$
for $0<\varepsilon<1$.
Setting $\nu=1$ and $R=K\text{diam}(\Theta)$ and taking logarithms on both sides, we obtain
\begin{equation*}
    \log \mathcal{N} \left( \varepsilon\|m\|_{Q,2}, \mathcal{F}_\eta,\|\cdot\|_{Q,2} \right) \leq \nu\log( R/\varepsilon), \qquad 0<\varepsilon<1.
\end{equation*}

Since $m=F_\eta$ and the bound above is independent of $Q$, taking the supremum over finitely supported probability measures $Q$ gives
$$
\sup_Q \log \mathcal{N} \left( \varepsilon\|F_\eta\|_{Q,2}, \mathcal{F}_\eta, \|\cdot\|_{Q,2} \right) \leq \nu\log(R/\varepsilon), \qquad 0<\varepsilon\leq1.
$$

This completes the verification of Condition~\ref{cond:a2}(b).

\item The next step is to verify Condition \ref{cond:a2}(c) (i)--(iii).
\begin{enumerate}
    \item[i.] For each $\eta = \{g, \mu\} \in \mathcal{T}_n$ and $\theta \in \Theta$, the following holds:  
    \begin{align*}
    &\EE[\psi(W_i; \theta, \eta)] - \EE[\psi(W_i; \theta, \eta_0)]  \\
    &= \EE\left[(Y_i - e^{\theta A_i + g})(A_i - \mu) - (Y_i - e^{\theta A_i + g_0})(A_i - \mu_0)\right]  \\
    &=\EE\left[(Y_i-e^{\theta A_i+g})(A_i-\mu) - (Y_i-e^{\theta A_i+g})(A_i-\mu_0)\right.  \\
    & \quad + \left.(Y_i-e^{\theta A_i+g})(A_i-\mu_0) - (Y_i-e^{\theta A_i+g_0})(A_i-\mu_0)\right]   \\
    &= \EE\left[-(Y_i-e^{\theta A_i+g})(\mu-\mu_0) + (e^{\theta A_i+g_0}-e^{\theta A_i+g})(A_i-\mu_0)\right]   \\
    & \leq \underbrace{\EE\left[\left|(Y_i-e^{\theta A_i+g})\cdot(\mu-\mu_0)\right|\right]}_{\text{Term 1}} + \underbrace{\EE\left[\left|(e^{\theta A_i+g_0}-e^{\theta A_i+g})\cdot(A_i-\mu_0)\right|\right]}_{\text{Term 2}}.
    \end{align*}

    The second equality holds because we add and subtract $(Y_i-e^{\theta A_i+g})(A_i-\mu_0)$.
    The last inequality is due to the triangle inequality.
    
    We first examine Term 1. There exists a positive constant $C_1 < \infty$ such that:

    \begin{align*}
        \EE\left[\left|(Y_i-e^{\theta A_i+g})\cdot(\mu-\mu_0)\right|\right] &\leq \norm{Y_i-e^{\theta A_i+g}}_{P,2} \cdot \norm{\mu-\mu_0}_{P,2} \\
        & \leq \left(\norm{Y_i}_{P,2} + \norm{e^{\theta A_i+g}}_{P,2}\right)\cdot \norm{\mu-\mu_0}_{P,2} \\
        &=C_1 \norm{\mu-\mu_0}_{P,2}.
    \end{align*}

    The first inequality follows from the Cauchy--Schwarz inequality.
    The second inequality is due to the triangle inequality.
    The last inequality holds due to the boundedness of $\theta$, $A_i$, $g$ and $\norm{Y_i}_{P,2}$ by Assumption \ref{ass:reg}(a)--(c).

    Next, we apply the MVT to the exponential term in Term 2:
    \begin{equation} \label{eq:mvt_g}
        |e^{\theta A_i+g}-e^{\theta A_i+g_0}| = e^{\theta A_i+ g^*}\cdot|g-g_0|,
    \end{equation}
    where $g^*$ lies between $g$ and $g_0$. 

    Substitute equation~(\ref{eq:mvt_g}) into Term 2 and then apply the Cauchy--Schwarz inequality:

    \begin{align*}
        \EE\left|(e^{\theta A_i+g_0}-e^{\theta A_i+g})\cdot(A_i-\mu_0)\right|
        &= \EE\left| (A_i-\mu_0)e^{\theta A_i+g^*}\cdot(g-g_0) \right| \\
        &\leq \norm{(A_i-\mu_0)e^{\theta A_i+g^*}}_{P,2} \cdot \norm{g-g_0}_{P,2}.
    \end{align*}
    
    Since $A_i$, $g$, and $\mu_0$ are bounded by Assumption \ref{ass:reg}(b), \ref{ass:ml}(a) and Remark \ref{remark:mu}, the exponential term is bounded, and there exists a positive constant $C_2<\infty$ such that:
    \begin{align*}
        \norm{(A_i-\mu_0)e^{\theta A_i+g^*}}_{P,2} \leq C_2.
    \end{align*}
    Therefore, 
    $$
        \EE\left|(e^{\theta A_i+g_0}-e^{\theta A_i+g})\cdot(A_i-\mu_0)\right| \leq C_2\norm{g-g_0}_{P,2}.
    $$
    
    Combining the bounds in Terms 1 and 2 and letting $C=\max(C_1, C_2)$, we obtain:
    \begin{align*}
        \sup_{\eta \in \mathcal{T}_n,\, \theta \in \Theta} \left|\EE(\psi(W_i; \theta, \eta)) - \EE(\psi(W_i; \theta, \eta_0))\right| \leq C(\norm{g-g_0}_{P,2}+\norm{\mu-\mu_0}_{P,2}).
    \end{align*}
    
    By Assumption \ref{ass:ml}(b) and Remark \ref{remark:mu}, there exists a sequence $\xi_{2,n}=o(n^{-1/4})$ that bounds the sum of the $L_2$ nuisance estimation errors uniformly over $\mathcal{T}_n$.
    Choose a sequence $\tau_n$ such that
    \begin{align*}
        \xi_{2,n}=o(\tau_n)
        \qquad\text{and} \qquad
        \tau_n=o(n^{-1/4}).    
    \end{align*}
    We additionally require $\tau_n\geq n^{-1/2}\log n$, which is compatible with the preceding restrictions because both $\xi_{2,n}$ and $n^{-1/2}\log n$ are $o(n^{-1/4})$.
    It then follows that
    $$
        \sup_{\eta \in \mathcal{T}_n,\; \theta \in \Theta} \left|\EE(\psi(W_i; \theta, \eta)) - \EE(\psi(W_i; \theta, \eta_0))\right| =o(\tau_n).
    $$

    \item[ii.] By adding and subtracting $\EE[\psi(W_i; \theta_0,\eta)]$, we obtain the following inequality:
    
    \begin{align} \label{eq:score.diff.2}
        &\EE\left[\left(\psi(W_i; \theta, \eta) - \psi(W_i; \theta_0, \eta_0)\right)^2\right] \nonumber \\
        &= \EE\left[\left( \left(\psi(W_i; \theta, \eta) - \psi(W_i; \theta, \eta_0) \right) + \left(\psi(W_i; \theta, \eta_0) - \psi(W_i; \theta_0, \eta_0)\right) \right)^2\right] \nonumber \\
        &\leq 2\underbrace{\EE\left[ \left(\psi(W_i; \theta, \eta) - \psi(W_i; \theta, \eta_0) \right)^2 \right]}_{\text{Term 1: error due to }\eta}
        + 2\underbrace{\EE\left[ \left(\psi(W_i; \theta, \eta_0) - \psi(W_i; \theta_0, \eta_0) \right)^2 \right]}_{\text{Term 2: error due to }\theta}.
    \end{align}
    The last inequality holds as $(a+b)^2 \leq 2a^2+2b^2$. 

    First, we examine Term 1 in equation~(\ref{eq:score.diff.2}) and obtain the following by adding and subtracting $(Y_i-e^{\theta A_i+g})(A_i-\mu_0)$:
    \begin{align*}
        &\EE\left[(\psi(W_i; \theta, \eta) - \psi(W_i; \theta, \eta_0))^2\right] \\
        &= \EE\left[ \left(  (Y_i-e^{\theta A_i+g})(A_i-\mu) - (Y_i -e^{\theta A_i+g_0})(A_i-\mu_0) \right)^2\right]  \\
        &= \EE \left[\left(  (Y_i-e^{\theta A_i+g})(A_i-\mu) - (Y_i-e^{\theta A_i+g})(A_i-\mu_0)\right.\right. \\
        & \qquad+ \left.\left.(Y_i-e^{\theta A_i+g})(A_i-\mu_0) - (Y_i -e^{\theta A_i+g_0})(A_i-\mu_0) \right)^2 \right]  \\
        &= \EE \left[ \left( -(Y_i-e^{\theta A_i+g})(\mu-\mu_0)
        + (A_i-\mu_0)\left(e^{\theta A_i+g_0}-e^{\theta A_i+g}\right)\right)^2\right] \\
        & \leq \underbrace{2\,\EE \left[ \left( (Y_i-e^{\theta A_i+g})(\mu-\mu_0) \right)^2 \right]}_{\text{Term 1.a: error from }\mu}
        + \underbrace{2\,\EE \left[ \left( (A_i-\mu_0)\left(e^{\theta A_i+g_0}-e^{\theta A_i+g}\right) \right)^2\right]}_{\text{Term 1.b: error from }g}
    \end{align*}
    Again, the last inequality holds because $(a+b)^2 \leq 2a^2+2b^2$.

    For Term 1.a, there exists a sequence $\epsilon_{n,\mu} \to 0$ such that $|\mu - \mu_0| \leq \epsilon_{n,\mu}$ by Remark \ref{remark:mu}.
    We pull this error bound out of the expectation:
    \begin{align*}
        2\,\EE \left[ (Y_i-e^{\theta A_i+g})^2(\mu-\mu_0)^2 \right]
        &\leq 2\epsilon_{n,\mu}^2 \cdot \EE \left[ (Y_i-e^{\theta A_i+g})^2 \right] \leq 2 \epsilon_{n,\mu}^2 C_\mu = o(1).
    \end{align*}
    The last inequality holds because $A_i, \mu_0$, and $g$ are uniformly bounded by Assumption \ref{ass:reg}(a)--(c).
    Thus, the second moment of the residual is bounded by a constant $C_\mu$.
    
    Similarly, for Term 1.b, there exists a sequence $\epsilon_{n,g} \to 0$ such that $|g - g_0| \leq \epsilon_{n,g}$ by Assumption \ref{ass:ml}(a).
    First, we apply the MVT to the exponential term:
    $$
        |e^{\theta A_i+g}-e^{\theta A_i+g_0}| \leq e^{\theta A_i+g^*}|g-g_0|.
    $$
    Substituting this into the expectation and pulling out the error bound $\epsilon_{n,g}^2$ gives:
    \begin{align*}
        &2\,\EE \left[ (A_i-\mu_0)^2 e^{2(\theta A_i+g^*)} (g-g_0)^2 \right] \\
        &\quad\leq 2\epsilon_{n,g}^2 \cdot \EE \left[ (A_i-\mu_0)^2 e^{2(\theta A_i+g^*)} \right] \\
        &\quad\leq 2 \epsilon_{n,g}^2 C_g = o(1).
    \end{align*}
    The last inequality holds because, again, $A_i, \mu_0$, and $g$ are uniformly bounded by Assumption \ref{ass:reg}(a)--(c) such that the expectation term is bounded by a constant $C_g$.

    For Term 2, the MVT gives
    a $\theta^*$ between $\theta$ and $\theta_0$ such that
    \begin{align*}
        \left|\psi(W_i;\theta,\eta_0)-\psi(W_i;\theta_0,\eta_0) \right|
        =
        \left|\partial_\theta\psi(W_i;\theta^*,\eta_0)\right|\cdot|\theta-\theta_0|.
    \end{align*}
    The derivative is
    \begin{align*}
        \partial_\theta\psi(W_i;\theta^*,\eta_0)
        &= -\mu_{\theta^*}(1-\mu_{\theta^*})
        \left(Y_i-\exp\{\theta^*A_i+g_0\}\right) \\
        &\quad-(A_i-\mu_{\theta^*})A_i \exp\{\theta^*A_i+g_0\}.
    \end{align*}
    Since $0\leq\mu_\theta^*\leq1$,
    $\mu_\theta^*(1-\mu_\theta^*)\leq1/4$, and $A_i$,
    $\theta^*$, and $g_0$ are bounded, there exists a constant
    $C_\lambda<\infty$ such that
    \begin{align*}
        \sup_{\theta\in\Theta} \left| \partial_\theta \psi(W_i;\theta^*,\eta_0) \right| \leq
        \frac{1}{4}|Y_i| + \frac{5}{4}C_\lambda.
    \end{align*}
    Therefore, Assumption~\ref{ass:reg}(a)--(c) implies that
    \begin{align*}
        \EE\left[ \sup_{\theta^*\in\Theta} \left| \partial_\theta\psi(W_i;\theta^*,\eta_0) \right|^2 \right]
        \leq C_\theta^2<\infty.
    \end{align*}
    It follows that Term 2 in equation~(\ref{eq:score.diff.2}) satisfies
    \begin{align*}
        \EE\left[ \left( \psi(W_i;\theta,\eta_0) - \psi(W_i;\theta_0,\eta_0) \right)^2 \right] & = |\theta-\theta_0|^2
        \EE\left[ \left( \partial_\theta\psi(W_i;\theta^*,\eta_0) \right)^2 \right] \\
        &\leq |\theta-\theta_0|^2 \EE\left[ \sup_{\theta^*\in\Theta} \left| \partial_\theta\psi(W_i;\theta^*,\eta_0) \right|^2 \right] \\
        &\leq \tau_n^2C_\theta^2 = o(n^{-1/2}).
    \end{align*}

    Therefore, combining Terms 1.a, 1.b, and 2, we have:
    \begin{align*}
        \sup_{\eta\in \mathcal{T}_n,\!|\theta-\theta_0|\leq\tau_n}\EE\left[\left(\psi(W_i; \theta, \eta) - \psi(W_i; \theta_0, \eta_0)\right)^2\right]
        &\leq C_\mu C_gC_\theta^2(\epsilon_{n,\mu}^2 + \epsilon_{n,g}^2 + \tau_n^2) = o(1).
    \end{align*}

    \item[iii.] For $r\in(0,1)$, define $\theta_r=\theta_0+r(\theta-\theta_0)$ and $g_r=g_0+r(g-g_0)$. Since $\mu$ depends on $\theta$, we define
    $$
    \mu_{r,\theta} = \mu_{0,\theta}+r(\mu_\theta-\mu_{0,\theta}), \qquad \theta\in\Theta,
    $$
    and let
    $$
    \eta_r = (g_r,\mu_{r,\cdot}) = \eta_0+r(\eta-\eta_0).
    $$
    Thus, the nuisance value in the score is
    $$
    \mu_r = \mu_{r,\theta_r} = (1-r)\mu_{0,\theta_r}+r\mu_{\theta_r},
    $$
    and we have 
    $$
    \psi(W;\theta_r,\eta_r) = (A-\mu_r)\left(Y-e^{\theta_r A+g_r}\right).
    $$

    For every $\eta=(g,\mu)\in\mathcal{T}_n$, we obtain the following bound from STEP 1:
    $$
    \|g-g_0\|_\infty+\sup_{\theta\in\Theta}\|\mu_\theta-\mu_{0,\theta}\|_\infty \leq \xi_{1,n}=o(1)
    $$
    and
    $$
    \|g-g_0\|_{P,2}+\sup_{\theta\in\Theta}\|\mu_\theta-\mu_{0,\theta}\|_{P,2} \leq \xi_{2,n}=o(n^{-1/4}).
    $$
    We use the choice $\tau_n=o(n^{-1/4})$ from part (i) and consider $|\theta-\theta_0|\leq\tau_n$.

    We first obtain bounds for the derivatives of $\mu_r$.
    Let $\mu_r'$ and $\mu_r''$ denote the first and second derivatives with respect to $r$.
    All partial derivatives are evaluated at $\theta=\theta_r$.
    By the product rule and chain rule,
    \begin{align*}
    \mu_r' &= \mu_{\theta_r}-\mu_{0,\theta_r} + (\theta-\theta_0)\left((1-r)\frac{\partial\mu_{0,\theta}}{\partial\theta}+r\frac{\partial\mu_\theta}{\partial\theta}\right),
    \end{align*}
    and
    \begin{align*}
    \mu_r'' &= 2(\theta-\theta_0)\left(\frac{\partial\mu_\theta}{\partial\theta}-\frac{\partial\mu_{0,\theta}}{\partial\theta}\right) + (\theta-\theta_0)^2\left((1-r)\frac{\partial^2\mu_{0,\theta}}{\partial\theta^2}+r\frac{\partial^2\mu_\theta}{\partial\theta^2}\right).
    \end{align*}

    For the binary treatment, we can obtain that
    $$
    \frac{\partial\mu_\theta}{\partial\theta} = \mu_\theta(1-\mu_\theta), \qquad \frac{\partial^2\mu_\theta}{\partial\theta^2} = \mu_\theta(1-\mu_\theta)(1-2\mu_\theta),
    $$
    and the same identities hold for $\mu_{0,\theta}$.
    Since $\mu_\theta$ and $\mu_{0,\theta}$ take values in $[0,1]$, their first and second derivatives with respect to $\theta$ are uniformly bounded.
    Moreover, for every $\theta$,
    \begin{align*}
    \left|\frac{\partial\mu_\theta}{\partial\theta}-\frac{\partial\mu_{0,\theta}}{\partial\theta}\right| &= \left|\mu_\theta(1-\mu_\theta)-\mu_{0,\theta}(1-\mu_{0,\theta})\right|\\
    &= |\mu_\theta-\mu_{0,\theta}|\,|1-\mu_\theta-\mu_{0,\theta}|\\
    &\leq |\mu_\theta-\mu_{0,\theta}|.
    \end{align*}
    Consequently, we obtain the following: 
    \begin{equation}
    \label{eq:mu-path-first-bound}
    |\mu_r'| \leq |\mu_{\theta_r}-\mu_{0,\theta_r}|+C|\theta-\theta_0|,
    \end{equation}
    and
    \begin{equation}
    \label{eq:mu-path-second-bound}
    |\mu_r''| \leq 2|\theta-\theta_0|\,|\mu_{\theta_r}-\mu_{0,\theta_r}|+C|\theta-\theta_0|^2.
    \end{equation}
    Here, $C$ denotes a sufficiently large finite constant independent of $r$, $\theta$, $\eta$, and $n$.

    Now we want to examine 
    \begin{align*}
        \frac{\partial}{\partial r}\EE [\psi(W;\theta_r,\eta_r)], \qquad \text{and} \qquad \frac{\partial^2}{\partial r^2}\EE [\psi(W;\theta_r,\eta_r)]. 
    \end{align*}
    The first derivative is
    \begin{align*}
    \frac{\partial}{\partial r}\EE [\psi(W;\theta_r,\eta_r)] &= -\EE \left[(A-\mu_r)e^{\theta_r A+g_r}\left(A(\theta-\theta_0)+g-g_0\right)\right]\\
    &\quad - \EE \left[\mu_r'\left(Y-e^{\theta_r A+g_r}\right)\right].
    \end{align*}
    Differentiating again gives
    \begin{align*}
    \frac{\partial^2}{\partial r^2}\EE [\psi(W;\theta_r,\eta_r)] 
    &= \underbrace{2\EE \left[\mu_r'e^{\theta_r A+g_r}\left(A(\theta-\theta_0)+g-g_0\right)\right]}_{\text{Term 1}}\\
    &\quad - \underbrace{\EE \left[(A-\mu_r)e^{\theta_r A+g_r}\left(A(\theta-\theta_0)+g-g_0\right)^2\right]}_{\text{Term 2}}\\
    &\quad - \underbrace{\EE \left[\mu_r''\left(Y-e^{\theta_r A+g_r}\right)\right]}_{\text{Term 3}}.
    \end{align*}
    By the triangle inequality,
    $$
    \left|\frac{\partial^2}{\partial r^2}\EE [\psi(W;\theta_r,\eta_r)]\right| \leq |\text{Term 1}|+|\text{Term 2}|+|\text{Term 3}|.
    $$

    First, we examine Term 1.
    By Assumption~\ref{ass:reg}(a-c), we have that $|A|$, $\exp\{\theta_r A+g_r\}$, $|A-\mu_r|$, and $|A(\theta-\theta_0)+g-g_0|$ are uniformly bounded.
    Combining this bound with equation~\eqref{eq:mu-path-first-bound}, we obtain
    \begin{align*}
    |\text{Term 1}| &\leq 2\EE \left[|\mu_r'|e^{\theta_r A+g_r}|A(\theta-\theta_0)+g-g_0|\right]\\
    &\leq C\, \EE \left[\left(|\mu_{\theta_r}-\mu_{0,\theta_r}|+|\theta-\theta_0|\right)\left(|\theta-\theta_0|+|g-g_0|\right)\right]\\
    &= C\Big(|\theta-\theta_0|^2+|\theta-\theta_0|\cdot\EE [|g-g_0|]\\
    & \quad +|\theta-\theta_0|\cdot\EE [|\mu_{\theta_r}-\mu_{0,\theta_r}|]   +\EE [|\mu_{\theta_r}-\mu_{0,\theta_r}|\,\cdot|g-g_0|]\Big)\\
    &\leq C\Big(|\theta-\theta_0|^2+|\theta-\theta_0|\cdot\|g-g_0\|_{P,2}\\
    & \quad +|\theta-\theta_0|\cdot\|\mu_{\theta_r}-\mu_{0,\theta_r}\|_{P,2}  +\|\mu_{\theta_r}-\mu_{0,\theta_r}\|_{P,2}\cdot\|g-g_0\|_{P,2}\Big)\\
    &\leq C\left(\tau_n^2+\tau_n\xi_{2,n}+\xi_{2,n}^2\right) = o(n^{-1/2}).
    \end{align*}
    The second-to-last inequality follows from the Cauchy--Schwarz inequality.

    Similarly, for Term 2,
    \begin{align*}
    |\text{Term 2}| &\leq \EE \left[|A-\mu_r|e^{\theta_r A+g_r}\left(A(\theta-\theta_0)+g-g_0\right)^2\right]\\
    &\leq C \, \EE \left[\left(A(\theta-\theta_0)+g-g_0\right)^2\right]\\
    &\leq C\left(|\theta-\theta_0|^2+2|\theta-\theta_0|\EE [|g-g_0|]+\EE [(g-g_0)^2]\right)\\
    &\leq C\left(|\theta-\theta_0|^2+2|\theta-\theta_0|\|g-g_0\|_{P,2}+\|g-g_0\|_{P,2}^2\right)\\
    &\leq C\left(\tau_n^2+\tau_n\xi_{2,n}+\xi_{2,n}^2\right)=o(n^{-1/2}).
    \end{align*}

    Finally, we examine Term 3.
    Since
    $$
    \EE [Y\mid A,X]=e^{\theta_0 A+g_0},
    $$
    and $\mu_r''$ is a function of $X$, iterated expectation gives
    \begin{align*}
    \EE \left[\mu_r''\left(Y-e^{\theta_r A+g_r}\right)\right] &= \EE \left[\mu_r''\EE \left[Y-e^{\theta_r A+g_r}\mid A,X\right]\right]\\
    &= \EE \left[\mu_r''\left(e^{\theta_0 A+g_0}-e^{\theta_r A+g_r}\right)\right].
    \end{align*}
    Both exponential terms are uniformly bounded.
    Hence, using equation~\eqref{eq:mu-path-second-bound},
    \begin{align*}
    |\text{Term 3}| &\leq \EE \left[|\mu_r''| \cdot \left|e^{\theta_0 A+g_0}-e^{\theta_r A+g_r}\right|\right]\\
    &\leq C \, \EE [|\mu_r''|]\\
    &\leq C\left(|\theta-\theta_0|^2+|\theta-\theta_0| \cdot \EE [|\mu_{\theta_r}-\mu_{0,\theta_r}|]\right)\\
    &\leq C\left(|\theta-\theta_0|^2+|\theta-\theta_0| \cdot \|\mu_{\theta_r}-\mu_{0,\theta_r}\|_{P,2}\right)\\
    &\leq C\left(\tau_n^2+\tau_n\xi_{2,n}\right)=o(n^{-1/2}).
    \end{align*}

    Combining Terms 1, 2, and 3 yields
    \begin{align*}
    &\sup_{\substack{r\in(0,1),\,\eta\in\mathcal{T}_n,\!|\theta-\theta_0|\leq\tau_n}}\left|\frac{\partial^2}{\partial r^2} \EE [\psi(W;\theta_r,\eta_r)]\right| \leq C \left( \tau_n^2+\tau_n\xi_{2,n}+\xi_{2,n}^2 \right)=o(n^{-1/2}),
    \end{align*}
    which verifies Condition~\ref{cond:a2}(c)iii.
    
\end{enumerate}

\item Finally, we verify Condition \ref{cond:a2}(d). 
This is directly given by Assumption \ref{ass:reg}(d): $c_0\leq \EE(\psi^2(W_i; \theta_0, \eta_0))\leq c_1$, where $0<c_0\leq c_1<\infty$.
\end{enumerate}

We use this result to prove Theorem~\ref{thm:normal}.

\begin{proof}[Proof of Theorem~\ref{thm:normal}]
The preceding verification of Conditions~\ref{cond:a1}--\ref{cond:a2} establishes that Assumptions 3.3 and 3.4 of \cite{chernozhukov18double} hold for the score function in equation~(\ref{eq:score}).
Therefore, Theorem 3.3 of \cite{chernozhukov18double} applies to the proposed DML estimator and yields the conclusions stated in Theorem~\ref{thm:normal}.
\end{proof}

\begin{proof}[Proof of Corollary~\ref{cor:variance}]
By the consistency of \(\hat\theta\), uniform consistency of the nuisance estimators, and smoothness of the score and its derivative, we have
$$
    \frac{1}{n}\sum_{i=1}^n \left( \psi (W_i;\hat\theta, \hat\eta)^2-\psi(W_i; \theta_0, \eta_0)^2 \right) = o_p(1)
$$
and
$$
    \hat J - \frac{1}{n}\sum_{i=1}^n \partial_\theta\psi(W_i;\theta_0,\eta_0) = o_p(1). 
$$
The law of large numbers then gives
$$
    n^{-1}\sum_{i=1}^n\psi (W_i;\hat\theta, \hat\eta)^2 \xrightarrow{p} \EE[\psi (W_i; \theta_0, \eta_0)^2]
$$
and $\hat J\xrightarrow{p}J_0$. 
Since $J_0$ is bounded away from zero by Assumption \(\ref{ass:reg}(d)\), the continuous mapping theorem yields $\hat\sigma^2\xrightarrow{p}\sigma_0^2$.
\end{proof}

\section{Proof of double robustness of the estimating equation} \label{sec:proofDR}
We formally show that the score function in equation~(\ref{eq:score}) is doubly robust at the true target parameter $\theta_0$.
Let $g$ and $\mu$ denote arbitrary and possibly misspecified limits of the two nuisance estimators and $\eta=(g, \mu)$.
We show that
\begin{align*}
    \EE\big[\psi(W_i;\theta_0,\eta)\big]=0
\end{align*}
if either $g=g_0$ or $\mu=\mu_0$.  
Consequently, under the identification condition in Condition~\ref{cond:a1}(c) and the remaining regularity conditions, the estimating equation remains unbiased for $\theta_0$ when either nuisance component is correctly specified.

\begin{proof}
First suppose that the outcome regression is correctly specified, so that $g=g_0$, while $\mu$ may be misspecified.
By the partially linear Poisson model in equation~(\ref{eq:plpois}),
\begin{align*}
    \EE\left[Y_i-e^{\theta_0A_i+g_0(\bX_i)}\mid A_i,\bX_i\right]=0.
\end{align*}
The law of iterated expectations therefore gives
\begin{align*}
    \EE\big[\psi(W_i;\theta_0,(g_0,\mu))\big]
    &=\EE\left[
        \big(A_i-\mu(\bX_i)\big)
        \EE\left[Y_i-e^{\theta_0A_i+g_0(\bX_i)}\mid A_i,\bX_i\right]
      \right] =0.
\end{align*}

Next suppose that the treatment nuisance is correctly specified, so that $\mu=\mu_0$, while $g$ may be misspecified.
Conditional on $(A_i,\bX_i)$,
\begin{align*}
    \EE\left[Y_i-e^{\theta_0A_i+g(\bX_i)}\mid A_i,\bX_i\right]
    &=e^{\theta_0A_i+g_0(\bX_i)}-e^{\theta_0A_i+g(\bX_i)} \\
    &=e^{\theta_0A_i}
      \left(e^{g_0(\bX_i)}-e^{g(\bX_i)}\right).
\end{align*}
Hence,
\begin{align*}
    \EE\big[\psi(W_i;\theta_0,(g,\mu_0))\big]
    &=\EE\left[
       \left(e^{g_0(\bX_i)}-e^{g(\bX_i)}\right)
       \EE\left[(A_i-\mu_0(\bX_i))e^{\theta_0A_i}\mid\bX_i\right]
      \right].
\end{align*}
By the definition of $\mu_0$ in Section~\ref{sec:score}, 
\begin{align*}
    \mu_0(\bX_i)
    =\frac{\EE[A_ie^{\theta_0A_i}\mid\bX_i]}
           {\EE[e^{\theta_0A_i}\mid\bX_i]},
\end{align*}
so the inner conditional expectation is zero:
\begin{align*}
    \EE\left[(A_i-\mu_0(\bX_i))e^{\theta_0A_i}\mid\bX_i\right]
     =\EE[A_ie^{\theta_0A_i}\mid\bX_i] -\mu_0(\bX_i)\EE[e^{\theta_0A_i}\mid\bX_i] =0.
\end{align*}
Thus, $\EE[\psi(W_i;\theta_0,(g,\mu_0))]=0$ for arbitrary $g$.

When $A_i$ is binary and $\pi_0(\bX_i)=\PP(A_i=1\mid\bX_i)$, we have
\begin{align*}
    \mu_0(\bX_i) =\frac{\pi_0(\bX_i)e^{\theta_0}} {1-\pi_0(\bX_i)+\pi_0(\bX_i)e^{\theta_0}}.
\end{align*}
Thus, this cancellation is exactly
\begin{align*}
    \EE\left[(A_i-\mu_0(\bX_i))e^{\theta_0A_i}\mid\bX_i\right]&=\pi_0(\bX_i)e^{\theta_0}
      -\mu_0(\bX_i)\left(1-\pi_0(\bX_i)
      +\pi_0(\bX_i)e^{\theta_0}\right)=0,
\end{align*}
Therefore, the population estimating equation is unbiased at $\theta_0$ when either nuisance component is correctly specified, which establishes the claimed double robustness.
\end{proof}

\section{Simulation study: implementation and hyperparameter selection}
\label{sec:sim_appendix}

This appendix provides the implementation of nuisance-function estimators, definitions of comparative methods, and hyperparameter selection procedures used in the simulation study.
The data-generating process is specified in Section~\ref{sec:sim_dgp}, and the numerical results are reported in Tables~\ref{tab:sim_g1} and~\ref{tab:sim_g2}.
The general DML estimation and inference procedure is described in Section~\ref{sec:procedure}.

\subsection{Estimator implementation}
We evaluated the proposed DML estimator using either logistic regression or a random forest to estimate the propensity score.
For both models, the outcome nuisance function was estimated using the partially linear Poisson neural network architecture introduced in Section~\ref{subsec:nuest}.
The covariate branch contained two equal-width hidden layers, with GELU activation and dropout applied after each hidden layer, while treatment entered through a separate linear node.
Parameters in the covariate branch and the treatment coefficient were jointly optimized using Adam and the Poisson negative log-likelihood described in Section~\ref{subsec:nuest}.

The logistic regression propensity model included the four covariates as linear terms on the logit scale.
The random forest propensity model used the same four covariates as predictors.
We applied the cross-fitting procedure in Section~\ref{sec:procedure} separately with $K=2$ and $K=5$.
Within each evaluation dataset, the hyperparameters selected from the reserved tuning datasets remained the same across the cross-fitting folds.
For numerical stability, the estimated propensity scores were truncated to $[10^{-6},1-10^{-6}]$.

For each propensity score model, we also evaluated a DML estimator in which the known $g_s(\bX_i)$ from the data-generating process replaced the nuisance function estimated by neural networks.
The propensity score remained estimated from the observed treatment and covariates.
This estimator serves as a benchmark since it isolates the contribution of nuisance estimation error from $g_0(\bX_i)$.

The linearly adjusted Poisson GLM included treatment and the four covariates as additive linear predictors.
Its fitted treatment coefficient was reported as the estimate of $\theta_0$.

The naive partially linear Poisson neural network estimator used the same architecture and selected hyperparameters as the DML outcome model but was fitted once to the complete evaluation dataset.
The coefficient on its linear treatment node was reported directly as the estimate of $\theta_0$, without cross-fitting or the Neyman-orthogonal estimating equation.

For AIPW, a full-sample Poisson GLM containing treatment and the four covariates as additive linear predictors was used to estimate the conditional mean count under each treatment level, denoted by $\hat m_a(\bX_i)$ for $a\in\{0,1\}$.
The propensity score was estimated using five-fold cross-fitting with either logistic regression or a random forest.
Let $\hat p_1(\bX_i)=\hat\pi_i$ and $\hat p_0(\bX_i)=1-\hat\pi_i$.
The marginal counterfactual means and the corresponding log count mean ratio were estimated as
\begin{align*}
\hat\nu_a &= \frac{1}{n}\sum_{i=1}^n \left[\hat m_a(\bX_i) +
\frac{\mathbf{1}\{A_i=a\}}{\hat p_a(\bX_i)} \left\{Y_i-\hat m_a(\bX_i)\right\}\right], \qquad a\in\{0,1\}.
\end{align*}

Under the partially linear Poisson model, $\nu_1/\nu_0=\exp\{\theta_0\}$, so the AIPW estimator 
\begin{align*}
\hat\theta_{\mathrm{AIPW}} =
\log\left(\frac{\hat\nu_1}{\hat\nu_0}\right)
\end{align*}
targets the same $\theta_0$ as the other methods.

\subsection{Hyperparameter selection}
For each simulation setting, we reserved the first five of the 500 simulated datasets for hyperparameter tuning and excluded these datasets from the Monte Carlo performance summaries.
Each candidate configuration was evaluated using five-fold cross-validation within each tuning dataset.
The validation losses were averaged across the five folds and five tuning datasets, and the configuration with the lowest average loss was fixed for analyzing the remaining 495 datasets.
This cross-validation for hyperparameter selection was separate from the two-fold or five-fold DML cross-fitting subsequently performed within each evaluation dataset.

For the neural network, the candidate grid included hidden-layer widths of 8 or 16, dropout probabilities of 0.01 or 0.05, and 500 or 1,000 training epochs.
The two hidden layers always had the same width, and the learning rate was fixed at 0.01.
Candidates were compared using the held-out Poisson negative log-likelihood for the complete partially linear predictor $\theta A_i+g(\bX_i)$.
The selected configuration was used for the cross-fitted outcome nuisance models in DML and for the corresponding full-sample naive neural network estimator.

For the random forest, the candidate grid included 50, 100, or 200 trees; maximum tree depths of 4, 8, or 16; minimum split sizes of 2, 5, or 10; and minimum leaf sizes of 2, 5, or 10.
At each split, the number of candidate predictors was determined by the square root of the total number of covariates, and each tree was fitted using a bootstrap sample.
The candidates were compared using held-out treatment log loss averaged across the validation folds and the five tuning datasets.

The random forest was tuned separately for each simulation setting, and the selected configuration was fixed for analyzing the remaining 495 datasets.

\section{Empirical study: data description, implementation, and diagnostics}
\label{sec:emp_appendix}

This appendix provides the baseline descriptive statistics, hyperparameter selection, inference details, and inverse probability weight diagnostics for the CCASAnet analysis.

The data were used with provider consent and are available upon request to the authors and CCASAnet, subject to provider approval.

\subsection{Baseline characteristics}
\label{sec:emp_baseline}
Table~\ref{tab:emp_baseline} summarizes the baseline characteristics of the 7,141 ART-naive patients in the analytic sample, stratified by the ART regimen that patients started.
The CCASAnet cohort underlying the synthetic data includes patients from six countries; however, the synthetic data only consist of five countries.
To improve estimation stability, categories with small sample sizes were combined in the models.
For country, Argentina and Mexico were combined into a single category.
For HIV acquisition mode, the Perinatal, Transfusion nonhemophilia related, and Other categories were combined into a single Other category.

\begin{table}[htbp]
\caption{Baseline characteristics of the CCASAnet analytic sample by initial ART regimen.}
\label{tab:emp_baseline}
\begin{center}
\small
\begin{tabular}{lccc}
\toprule
Characteristic & Non-DTG & DTG & Overall \\
 & ($N=4,918$) & ($N=2,223$) & ($N=7,141$) \\
\midrule
\multicolumn{4}{l}{\textbf{Time of ART initiation (years since January 1, 2013)}} \\
\quad Mean (SD) & 3.93 (2.35) & 7.56 (2.19) & 5.06 (2.85) \\
\quad Median [Min, Max] & 3.81 [0.00, 12.08] & 7.72 [0.03, 12.66] & 5.04 [0.00, 12.66] \\
\addlinespace[3pt]
\multicolumn{4}{l}{\textbf{Baseline CD4 count (cells/mm$^3$)}} \\
\quad Mean (SD) & 373 (221) & 356 (200) & 367 (215) \\
\quad Median [Min, Max] & 348 [16.0, 1,220] & 342 [19.0, 1,080] & 346 [16.0, 1,220] \\
\addlinespace[3pt]
\multicolumn{4}{l}{\textbf{Age at baseline (years)}} \\
\quad Mean (SD) & 35.0 (11.0) & 33.7 (11.0) & 34.6 (11.0) \\
\quad Median [Min, Max] & 31.2 [20.0, 74.5] & 30.0 [20.0, 71.1] & 31.0 [20.0, 74.5] \\
\addlinespace[3pt]
\multicolumn{4}{l}{\textbf{School level}} \\
\quad Level 0 & 264 (5.4\%) & 158 (7.1\%) & 422 (5.9\%) \\
\quad Level 1 & 314 (6.4\%) & 140 (6.3\%) & 454 (6.4\%) \\
\quad Level 2 & 643 (13.1\%) & 415 (18.7\%) & 1,058 (14.8\%) \\
\quad Level 3 & 1,419 (28.9\%) & 901 (40.5\%) & 2,320 (32.5\%) \\
\quad Level 4 & 1,902 (38.7\%) & 515 (23.2\%) & 2,417 (33.8\%) \\
\quad Level 5 & 376 (7.6\%) & 94 (4.2\%) & 470 (6.6\%) \\
\addlinespace[3pt]
\multicolumn{4}{l}{\textbf{Sex}} \\
\quad Female & 936 (19.0\%) & 460 (20.7\%) & 1,396 (19.5\%) \\
\quad Male & 3,982 (81.0\%) & 1,763 (79.3\%) & 5,745 (80.5\%) \\
\addlinespace[3pt]
\multicolumn{4}{l}{\textbf{Country}} \\
\quad Argentina & 91 (1.9\%) & 4 (0.2\%) & 95 (1.3\%) \\
\quad Brazil & 617 (12.5\%) & 923 (41.5\%) & 1,540 (21.6\%) \\
\quad Chile & 3,017 (61.3\%) & 837 (37.7\%) & 3,854 (54.0\%) \\
\quad Haiti & 693 (14.1\%) & 422 (19.0\%) & 1,115 (15.6\%) \\
\quad Mexico & 500 (10.2\%) & 37 (1.7\%) & 537 (7.5\%) \\
\addlinespace[3pt]
\multicolumn{4}{l}{\textbf{HIV acquisition mode}} \\
\quad Bisexual & 576 (11.7\%) & 346 (15.6\%) & 922 (12.9\%) \\
\quad Heterosexual contact & 1,585 (32.2\%) & 853 (38.4\%) & 2,438 (34.1\%) \\
\quad Homosexual contact & 2,590 (52.7\%) & 978 (44.0\%) & 3,568 (50.0\%) \\

\quad Perinatal & 26 (0.5\%) & 10 (0.4\%) & 36 (0.5\%) \\
\quad Transfusion nonhemophilia related & 13 (0.3\%) & 1 (0.0\%) & 14 (0.2\%) \\
\quad Other & 128 (2.6\%) & 35 (1.6\%) & 163 (2.3\%) \\
\bottomrule
\end{tabular}
\end{center}
\parbox{\textwidth}{\scriptsize Values are mean (standard deviation) and median [minimum, maximum] for continuous variables, or $n$ (\%) for categorical variables. Calendar time is measured in years since January 1, 2013. School level is reported as categorical levels 0--5. Country categories are displayed individually; Argentina and Mexico were combined in the models. HIV acquisition categories are displayed individually; the models combined the categories other than bisexual, heterosexual contact, and homosexual contact into the Other category.}
\end{table}

\subsection{Cross-fitting and hyperparameter selection}
We used five outer folds to obtain cross-fitted predictions of the outcome nuisance function $g_0(\bX)$ and the propensity score $\pi_0(\bX)$.
Within each outer training sample, hyperparameters of neural networks and random forests were selected using a single 80\%/20\% inner training-validation split.

For the neural network, the candidate grid included hidden-layer widths of 8, 16, or 32; dropout probabilities of 0.05 or 0.10; learning rates of $10^{-3}$ or $10^{-4}$; 500, 1,000, or 2,000 training epochs; and weight-decay values of $10^{-4}$ or $10^{-3}$.
Each candidate was fitted on the inner training sample and evaluated on the inner validation sample using the Poisson negative log-likelihood.
After selection, a new neural network was fitted using the complete outer training sample and used to obtain $\hat g_i$ for observations in the held-out outer fold.
The same cross-fitted $\hat g_i$ predictions were used for the DML estimators with logistic regression and random forest propensity scores.

The logistic regression propensity model required no hyperparameter tuning and was fitted on each complete outer training sample.
For the random forest, the candidate grid included 100 or 200 trees; maximum tree depths of 4 or 6; minimum split sizes of 2, 5, or 10; minimum leaf sizes of 2, 5, or 10.
At each split, the number of candidate predictors was determined by the square root of the total number of covariates.
Each candidate was evaluated using treatment log loss on the inner validation sample.
Similar to the outcome model, the selected random forest was refitted on the complete outer training sample and used to predict propensity scores in the held-out outer fold.
For numerical stability, the estimated propensity scores used by DML were truncated to $[10^{-6},1-10^{-6}]$.

\subsection{Inference for comparison methods}
For the AIPW estimators, we used a nonparametric bootstrap for their confidence intervals. 
The propensity scores for both AIPW estimators were obtained using five-fold cross-fitting.
We truncated the propensity scores from the logistic regression to $[0.01,0.99]$ and the RF propensity scores to $[10^{-6},1-10^{-6}]$.
We resampled the analytic data for $B=1,000$ bootstrap samples and refitted the outcome and propensity score models for each sample.
Percentile intervals were constructed for the AIPW estimates.
In addition, HC0 robust standard errors were used for the Poisson GLM model. 

\subsection{Distributions of inverse probability weights}
\label{sec:emp_weight_distribution}
Figure~\ref{fig:aipw_weight_distributions} presents the distributions of the raw and truncated inverse probability weights from logistic regression and random forests.

The first row shows that the prespecified truncation of propensity scores estimated using logistic regression reduced the magnitude of the most extreme inverse probability weights.
In contrast, the propensity scores estimated using random forests were not truncated, yet the resulting weights remained less extreme than those from logistic regression (second row).

\begin{figure}[H]
\centering
\includegraphics[width=0.85\textwidth]{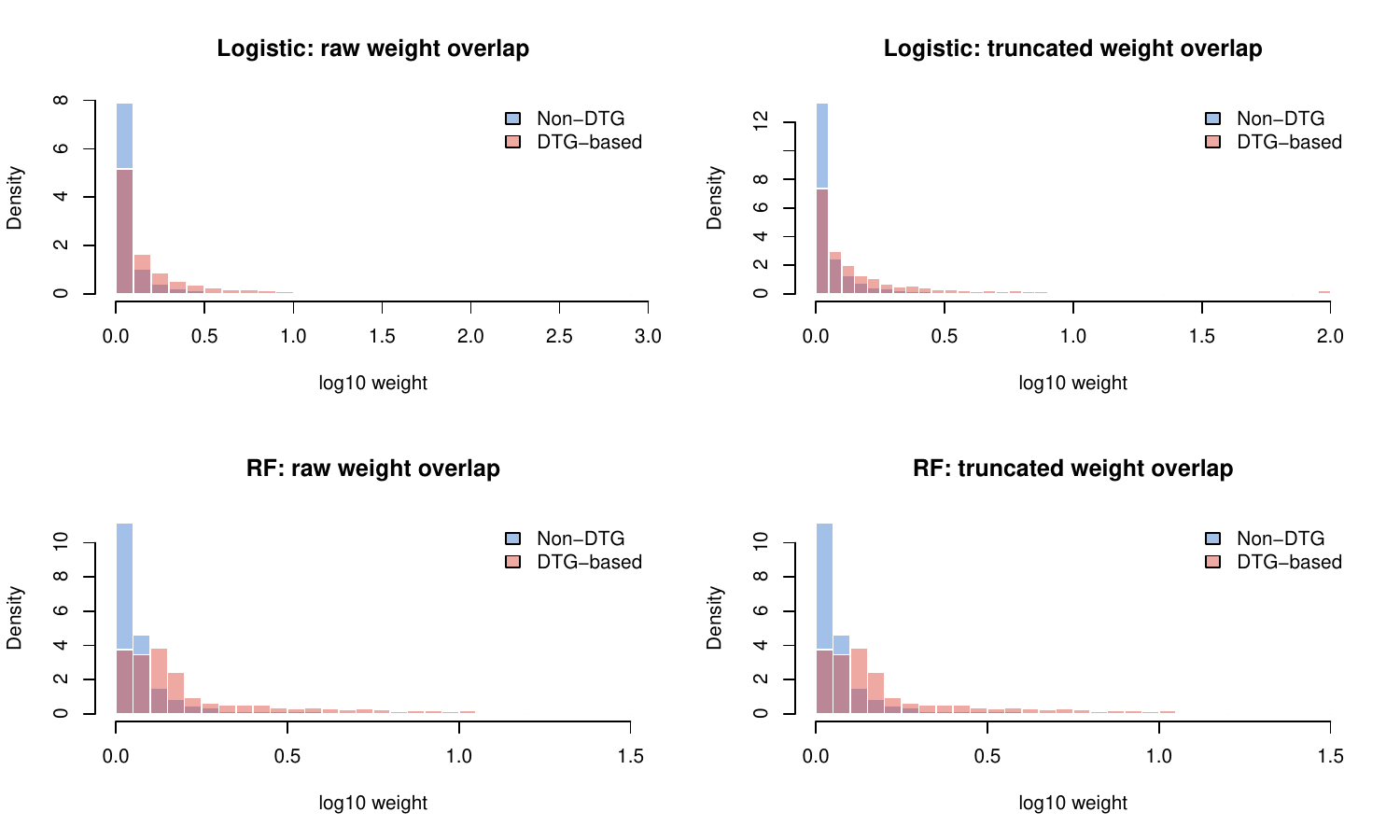}
\caption{\textbf{Inverse probability weight distributions.} The first row shows weights based on logistic regression, and the second row shows weights based on random forests. Within each row, the first panel shows the raw weights by ART regimen and their overlapping distributions, while the second panel shows the weights after propensity score truncation. Weights are displayed on the $\log_{10}$ scale; blue is non-DTG regimens and red is DTG-based regimens.}
\label{fig:aipw_weight_distributions}
\end{figure}

\end{document}